\colorlet{darkred}{red!77!black}
\newcommand\buchi{B\"{u}chi}
\newcommand\fnht{\mathsf{fnht}}
\newcommand\infi{\mathsf{infin}}
\newcommand\mft{\mathsf{mft}}
\newcommand\nc{\mathsf{nc}}
\renewcommand\next{\mathsf{next}}
\newcommand\opt{\mathsf{opt}}
\newcommand\pri{\mathsf{pri}}
\newcommand\s{\mathfrak s}
\newcommand\T{\mathcal T}
\newtheorem{theorem}{Theorem}[section]
\newtheorem{lemma}[theorem]{Lemma} 
\newtheorem{corollary}[theorem]{Corollary}
\newcommand\qed{\hfill $\Box$}
\newenvironment{proof}{\noindent\textbf{Proof. }}{\nopagebreak
 \qed\medskip}
\newenvironment{proofidea}{\noindent\textit{Proof idea. }}{\nopagebreak
  \medskip}
\begin{document}
\title{Tight Bounds for Complementing Parity Automata}

\author{Sven Schewe and Thomas Varghese \\ Department of Computer Science, University of Liverpool}
\date{}

\bibliographystyle{plain}
\pagestyle{plain}

\maketitle
\begin{abstract}
We follow a connection between tight determinisation and complementation and establish a complementation procedure from transition-labelled parity automata to transition-labelled nondeterministic B\"uchi automata. We prove it to be tight up to an $O(n)$ factor, where $n$ is the size of the nondeterministic parity automaton. This factor does not depend on the number of priorities.
\end{abstract}

\section{Introduction}
The precise complexity of complementing $\omega$-automata is an intriguing problem for two reasons:
first, the quest for optimal algorithms is a much researched problem~\cite{Buchi/62/Automata,Sakoda+Sipser/78/finTight,Pecuchet/86/Complement,Sistla+Vardi+Wolper/87/Complement,Safra/88/Safra,Michel/88/lowerComplementation,Thomas/99/complementation,Loding/99/omega,Kupferman+Vardi/01/Weak,GKSV/03/Complementing,FKV/06/tighter,Piterman/07/Parity,Vardi/07/Saga,Yan/08/lowerComplexity}, and second, complementation is a valuable tool in formal verification (c.f.,~\cite{kurshan/94/verification}), in particular when studying language inclusion problems of $\omega$-regular languages.
Complementation is also useful to check the correctness of translation techniques~\cite{Vardi/07/Saga,DBLP:conf/cav/TsaiTH13}.
The GOAL tool~\cite{DBLP:conf/cav/TsaiTH13}, for example, provides such a test suite and incorporates recent algorithms~\cite{Safra/88/Safra,Thomas/99/complementation,Kupferman+Vardi/01/Weak,Piterman/07/Parity} for B\"uchi complementation.

While devising optimal complementation algorithms for nondeterministic finite automata is simple---%
nondeterministic \emph{finite} automata can be determinised using a simple subset construction, and deterministic finite automata can be complemented by complementing the set of final states~\cite{Rabin+Scott/59/finite,Sakoda+Sipser/78/finTight}---%
devising optimal complementation algorithms for nondeterministic $\omega$-automata is hard, because simple subset constructions are not sufficient to determinise or complement them~\cite{Michel/88/lowerComplementation,Loding/99/omega}.

Given the hardness and importance of the problem, the complementation of \mbox{$\omega$-automata} enjoyed much attention. The initial focus was on the complementation of B\"uchi automata with state-based acceptance \cite{Buchi/62/Automata,Pecuchet/86/Complement,Sistla+Vardi+Wolper/87/Complement,Michel/88/lowerComplementation,Safra/88/Safra,Loding/99/omega,Thomas/99/complementation,Kupferman+Vardi/01/Weak,GKSV/03/Complementing,FKV/06/tighter,Vardi/07/Saga,Yan/08/lowerComplexity,Schewe/09/complementation,DBLP:conf/cav/TsaiTH13}, and it resulted in a continuous improvement of its upper and lower bounds.

The first complementation algorithm dates back to the introduction of \buchi\ automata in 1962.
In his seminal paper ``On a decision method in restricted second order arithmetic'' \cite{Buchi/62/Automata}, \buchi\ develops a doubly exponential complementation procedure.
While \buchi's result shows that nondeterministic \buchi\ automata (and thus \mbox{$\omega$-regular} expressions) are closed under complementation, complementing an automaton with $n$ states may, when using \buchi's complementation procedure, result in an automaton with $2^{2^{O(n)}}$ states, while an $\Omega(2^n)$ lower bound~\cite{Sakoda+Sipser/78/finTight} is inherited from finite automata.

In the late 80s, these bounds have been improved in a first sequence of results, starting with establishing an EXPTIME upper bound~\cite{Pecuchet/86/Complement,Sistla+Vardi+Wolper/87/Complement}, which matches the EXPTIME lower bound~\cite{Sakoda+Sipser/78/finTight} inherited from finite automata.
However, the early EXPTIME complementation techniques produce automata with up to $2^{O(n^2)}$ states~\cite{Pecuchet/86/Complement,Sistla+Vardi+Wolper/87/Complement};
hence, these upper bounds were still exponential in the lower bounds.

This situation changed in 1988, when Safra introduced his famous determinisation procedure for nondeterministic \buchi\ automata~\cite{Safra/88/Safra}, resulting in an $n^{O(n)}$ bound for \buchi\ complementation, while Michel~\cite{Michel/88/lowerComplementation} established a seemingly matching $\Omega(n!)$ lower bound in the same year.
Together, these results imply that \buchi\ complementation is in $n^{\theta(n)}$, leaving again the impression of a tight bound.

As pointed out by Vardi~\cite{Vardi/07/Saga}, this impression is misleading, because the $O()$ notation hides an $n^{\theta(n)}$ gap between both bounds.
This gap has been narrowed down in 2001 to $2^{\theta(n)}$ by the introduction of an alternative complementation technique that builds on level rankings and a cut-point construction~\cite{Kupferman+Vardi/01/Weak}.
The complexity of the plain method is approximately $(6n)^n$~\cite{Kupferman+Vardi/01/Weak}, leaving a $(6e)^n$ gap to Michel's lower bound~\cite{Michel/88/lowerComplementation}.

Subseqently, tight level rankings~\cite{FKV/06/tighter,Yan/08/lowerComplexity}
have been exploited by Friedgut, Kupferman, and Vardi~\cite{FKV/06/tighter} to improve the upper complexity bound to $O\big((0.96 n)^n\big)$, and by Yan~\cite{Yan/08/lowerComplexity} to improve the lower complexity bound to $\Omega\big((0.76 n)^n\big)$.
Schewe \cite{Schewe/09/complementation} has provided a matching upper bound, showing tightness up to an $O(n^2)$ factor.

In recent works, more succinct acceptance mechanism have been studied, where the most important ones are parity and generalised B\"uchi automata, as they occur naturally in the translation of $\mu$-calculi and LTL specifications, respectively. 
In \cite{Schewe+Varghese/12/generalisedBuchi}, we gave tight bounds for the determinisation and complementation of generalised \buchi\ automata.
For Rabin, Streett, and parity automata, there has been much progress \cite{DBLP:conf/lics/CaiZL09,DBLP:conf/csl/CaiZ11,DBLP:conf/fsttcs/CaiZ11}, in particular establishing an
$n^{\theta(n)}$ bound for parity complementation with state-based acceptance, which has been a great improvement and pushed tightness of parity~comple-  mentation to the level known from \buchi\ complementation since the late 80s \cite{Safra/88/Safra,Michel/88/lowerComplementation}.

\noindent{\bf Contribution.}\hspace{1ex}In this paper, we establish tight bounds for the complementation of parity automata with transition-based acceptance.
A generalisation of the ranking-based complementation procedures quoted above to transition-based acceptance is straight forward, and the Safra-style determinisation procedures from the literature \cite{Safra/88/Safra,Safra/92/Streett,Piterman/07/Parity,Schewe/09/determinise,Schewe+Varghese/12/generalisedBuchi} have a natural representation with an acceptance condition on transitions.
Their translation to state-based acceptance is by multiplying the acceptance from the last transition to the state space.

A similar observation can be made for other automata transformations, like the removal of $\varepsilon$-transitions from translations of $\mu$-calculi \cite{Wilke/01/Alternating,Schewe+Finkbeiner/06/ATM} and the treatment of asynchronous systems \cite{Schewe+Finkbeiner/06/Asynchronous}, where the state-space grows by multiplication with the acceptance information (e.g., maximal priority on a finite sequence of transitions), while it cannot grow in case of transition-based acceptance.
Similarly, tools like \textsc{SPOT} \cite{Duret-Lutz/11/spot} offer more concise automata with transition-based acceptance mechanism as a translation from LTL.
Using state-based acceptance in the automaton that we want to complement would also complicate the presentation of the complementation procedure.
But first and foremost, using transition-based acceptance provides cleaner results.

This is the case because in state-based acceptance, the role of the states is overloaded.
In finite automata over infinite structures, each state represents the class of tails of the word that can be accepted from this state.
In state-based acceptance, they have to account for the acceptance mechanism itself, too, while they are relieved from this burden in transition-based acceptance.
In complementation techniques based on rankings, this results in a situation where states with certain properties, such as final states for B\"uchi automata, can only occur with some ranks, but not with all.

As transition-based acceptance separates these concerns, the presentation becomes cleaner.
The natural downside is that we lose the $n^{O(n)}$ bound \cite{DBLP:conf/csl/CaiZ11} for parity complementation, as the number of priorities in a parity automaton with transition-based acceptance can grow arbitrarily.
But in return, we do get a clean and simple complementation procedure based on a data structure we call flattened nested history trees (FNHTs), which is inspired by a generalisation of history trees \cite{Schewe/09/determinise} to multiple levels, one for each even priority $\geq 2$.

In \cite{Schewe+Varghese/12/generalisedBuchi}, we showed a connection between optimal determinisation and complementation for generalised B\"uchi automata, where we exploit the nondeterministic power of a B\"uchi automaton to devise a tight complementation procedure. In this paper, we follow this connection between tight determinisation~\cite{Schewe+Varghese/14/parityDeterminisation} and complementation to devise a tight complementation construction from parity to nondeterministic B\"uchi automata.

We show that any procedure that complements full parity automata with states $Q$ and maximal priority $\pi$ has at least $|\fnht(Q,\pi)|/2$ states, where $\fnht(Q,\pi)$ is the set of FNHTs for a given set $Q$ of states and maximal priority $\pi$ of the parity automaton that is to be complemented.
Our complementation construction uses a marker in addition for its acceptance mechanism.
Essentially, it is used to mark some position of interest in an FNHT.
It accounts for the $O(n)$ gap between the upper and lower bound.
We show that, for $\pi \geq 2$ (and hence for B\"uchi automata upwards) the number of states of our complementation construction is bounded by $4n+1$ times the lower bound.

\section{Preliminaries}
We denote the non-negative integers by $\omega = \{0,1,2,3,...\}$.
For a finite alphabet $\Sigma$, an infinite \emph{word} $\alpha$ is an infinite sequence $\alpha_0 \alpha_1 \alpha_2 \cdots$ of letters from $\Sigma$.
We sometimes interpret $\omega$-words as functions $\alpha : i \mapsto \alpha_i$, and use $\Sigma^{\omega}$ to denote the 
$\omega$-words over~$\Sigma$.

$\omega$-automata are finite automata that are interpreted over infinite words and recognise $\omega$-regular languages $L\subseteq \Sigma^{\omega}$.
Nondeterministic parity automata are quintuples \mbox{$\mathcal P=(Q,\Sigma,I,T,\pri:T \rightarrow \Pi)$}, where $Q$ is a finite set of states with a non-empty subset $I\subseteq Q$ of initial states, $\Sigma$ is a finite alphabet, $T\subseteq Q \times \Sigma \times Q$ is a transition relation that maps states and input letters to sets of successor states, and $\pri$ is a priority function that maps transitions to a finite set $\Pi \subset \omega$ of non-negative integers.

A \emph{run} $\rho$ of a nondeterministic parity automaton $\mathcal P$ on an input word $\alpha$ is an infinite sequence $\rho: \omega\rightarrow Q$ of states of $\mathcal P$, also denoted $\rho = q_0 q_1 q_2\cdots \in Q^{\omega}$, such that the first symbol of $\rho$ is an initial state $q_0\in I$ and, for all $i\in\omega$, $(q_{i},\alpha_{i},q_{i+1})\in T$ is a valid transition.
For a run $\rho$ on a word $\alpha$, we denote with $\overline{\rho}: i \mapsto \big(\rho(i),\alpha(i),\rho(i+1)\big)$ the transitions of $\rho$.
Let $\infi(\rho) = \{q\in Q \mid \forall i \in \omega \; \exists j>i\mbox{ such that } \rho(j)=q\}$ denote the set of all states that occur infinitely often during the run $\rho$.
Likewise, let $\infi(\overline{\rho})=\{t\in T \mid \forall i \in \omega \; \exists j>i\mbox{ such that } \overline{\rho}(j)=t\}$ denote the set of all transitions that are taken infinitely many times in $\overline{\rho}$.
Acceptance of a run is defined through the priority function $\pri$.
A run $\rho$ of a parity automaton is \emph{accepting} if $\limsup_{n\rightarrow\infty} \pri\big(\overline{\rho}(n)\big)$ is even, that is, if the highest priority that occurs infinitely often in the transitions of $\rho$ is even.
A word $\alpha$ is accepted by a parity automaton $\mathcal P$ iff it has an accepting run, and its language $\mathcal L(\mathcal P)$ is the set of words it accepts.

Parity automata with $\Pi \subseteq \{1,2\}$ are called \emph{B\"uchi} automata.
B\"uchi automata are denoted $\mathcal B=(Q,\Sigma,I,T,F)$, where $F \subseteq T$ are called the final or accepting transitions.
A run is accepting if it contains infinitely many accepting transitions.
$\mathcal B$ is thus a rendering of the parity automaton, where $\pri:t \mapsto 2$ if $t \in F$ and $\pri:t \mapsto 1$ if $t \notin F$.

We assume w.l.o.g.\ that the set $\Pi$ of priorities satisfies that $\min \Pi \in \{0,1\}$.
If this is not the case, we can simply change $\pri$ accordingly to $\pri':t \mapsto \pri(t) - 2$ several times until this constraint is satisfied.
We likewise assume that $\Pi$ has no holes, that is, $\Pi = \{ i \in \omega \mid \max \Pi \geq i \geq \min \Pi\}$.
If there is a hole $h \notin \Pi$ with $\max \Pi > h > \min \Pi$, we can change $\pri$ to $\pri' : t \mapsto \pri(t)$ if $\pri(t) < h$ and $\pri' : t \mapsto \pri(t)-2$ if $\pri(t) > h$.
Obviously, these changes do not affect the acceptance of any run, and applying finitely many of these changes brings $\Pi$ into this normal form.

The different priorities have a natural order $\succcurlyeq$, where $i \succ j$ if $i$ is even and $j$ is odd; $i$ is even and $i > j$; or $j$ is odd and $i < j$.
For a non-empty set $\Pi' \subseteq \Pi$ of priorities, $\opt \Pi' = \{i \in \Pi' \mid \forall j \in \Pi'.\ i \succcurlyeq j\}$ denotes the optimal priority for acceptance.

The complexity of a parity automaton $\mathcal P=(Q,\Sigma,I,T,\pri:T \rightarrow \Pi)$ is measured by its size $n = |Q|$ and its set of priorities $\Pi$.
For a given size $n$ and set of priorities $\Pi$, there is an automaton that recognises a hardest language.
This automaton is referred to as the full automaton $\mathcal P_n^\Pi=(Q,\Sigma,I,T,\pri:T \rightarrow \Pi)$, with
$|Q| = n$, $I=Q$, $\Sigma= Q \times Q \rightarrow 2^\Pi$, $T = \{q,\sigma,q') \mid \sigma(q,q') \neq \emptyset$, and $\pri(q,\sigma,q') = \opt \sigma(q,q')$.

Note that partial functions from $Q \times Q$ to $\Pi$ would work as well as the alphabet.
The larger alphabet is chosen for technical convenience in the proofs.
Any other language recognised by a nondeterministic parity automaton $\mathcal P$ with $n$ states and priorities $\Pi$ can essentially be obtained by a language restriction via alphabet restriction from $\mathcal P_n^\Pi$.

\section{Complementing parity automata}
The construction described in this section draws from two main sources of inspiration.
One source is the introduction of efficient techniques for the determinisation of parity automata in \cite{Schewe+Varghese/14/parityDeterminisation}.
The nested history trees used there have been our inspiration for the \emph{flattened nested history trees} that form the core data structure in the complementation from Subsection \ref{ssec:complement} and are the backbone of the lower bound proof from Subsection \ref{ssec:lower}.
The second source of inspiration is the connection \cite{Schewe+Varghese/12/generalisedBuchi} between the efficient determinisation based on history trees \cite{Schewe/09/determinise} for B\"uchi automata and generalised B\"uchi automata \cite{Schewe+Varghese/12/generalisedBuchi} and their level ranking based complementation \cite{Kupferman+Vardi/01/Weak,FKV/06/tighter,Schewe/09/complementation,Schewe+Varghese/12/generalisedBuchi}.

The intuition for the complementation is to use the nondeterministic power of a B\"uchi automaton to reduce the size of the data stored for determinisation.
As usual, this nondeterministic power is intuitively used to guess a point in time, where all nodes of the nested history trees from parity determinisation \cite{Schewe+Varghese/14/parityDeterminisation}, which are eventually always stable, are henceforth stable.
Alongside, the set of stable nodes can be guessed.

Like in the construction for B\"uchi automata, the structure can then be flattened, preserving the `nicking order', the order in which older nodes and descendants take preference in taking states of the nondeterministic parity automaton that is determinised.
The complement automaton runs in two phases: a first phase before this guessed point in time, and a second phase after this point, where the run starts in such a flattened tree.

In the first subsection, we introduce \emph{flattened nested history trees} as our main data structure.
While we take inspiration from nested history trees \cite{Schewe+Varghese/14/parityDeterminisation}, the construction is self-contained.
In the second subsection, we show that B\"uchi automata recognising the complement language of the full nondeterministic parity automaton $\mathcal P_n^\Pi$ need to be large by showing disjointness properties of accepting runs for a large class of words, one for each full flattened nested history tree introduced in Subsection \ref{ssec:fnht}.
The definition of this language is also instructive in how the data structure is exploited.

We extend our data structure by markers, resulting in \emph{marked flattened trees}, which are then used as the main part of the state space of the natural complementation construction introduced in Subsection \ref{ssec:complement}.
We show correctness of our complementation construction in Subsection \ref{ssec:correct} and tightness up to an $O(n)$ factor in Subsection \ref{ssec:tight}.

Note that all our constructions assume $\max \Pi \geq 2$, and therefore do not cover the less expressive CoB\"uchi automata.

\subsection{Flattened nested history trees \& marked flattened trees}
\label{ssec:fnht}
Flattened nested history trees (FNHTs) are the main data structure used in our complementation algorithm.
For a given parity automaton $\mathcal P=(Q,\Sigma,I,T,\pri:T \rightarrow \Pi)$, an FNHT over the set $Q$ of states, maximal priority $\pi_m = \max \Pi$ and maximal even priority  $\pi_e = \opt \Pi$, is a tuple $(\T,l_s:\T \rightarrow 2^Q,l_l:\T \rightarrow 2\mathbb N,l_p:\T \rightarrow 2^Q,l_r:\T \rightarrow 2^Q)$, where $\T$(an ordered, labelled tree) is a non-empty, finite, and prefix closed subset of finite sequences of natural numbers and a special symbol $\mathfrak s$ (for \emph{stepchild}), $\omega \cup \{\s\}$, that satisfies the constraints given below.
We call a node $v \s \in \T$ a \emph{stepchild} of $v$, and refer to all other nodes $vc$ with $c\in \omega$ as the \emph{natural children} of $v$.
$\nc(v) = \{ vc \mid c\in \omega$ and $vc \in \T\}$ is the set of natural children of $v$.
The root is a stepchild.

The \textbf{constraints} an FNHT quintuple has to satisfy are as follows:
\begin{itemize}
 \item Stepchildren have only natural children, and natural children % have
 only stepchildren.
 \item Only natural children and, when the highest priority $\pi$ is odd, the root may be leafs.
 \item $\T$ is order closed: for all $c,c'\in \omega$ with $c<c'$, $vc' \in \T$ implies $vc \in \T$.
 \item For all $v \in \T$, $l_s(v) \neq \emptyset$.
 \item If $v$ is a stepchild, then $l_p(v) = \emptyset$.
 \item If $v$ is a stepchild, then $l_s(v) = l_r(v) \cup \bigcup_{v' \in \nc(v)} l_s(v')$.
 
The sets $l_s(v')$ and $l_s(v'')$ are disjoint for all $v',v'' \in \nc(v)$ with $v' \neq v''$, and $l_r(v)$ is disjoint with $\bigcup_{v' \in \nc(v)} l_s(v')$.
 \item If $v$ is a natural child, then $l_p(v) \neq \emptyset$, $l_s(v) = l_p(v) \cup l_r(v)$, and $l_p(v) \cap l_r(v) = \emptyset$. % is disjoint with $l_r(v)$.
 \item If a natural child $v$ is not a leaf, then $l_s(v\s)=l_p(v)$.
 \item $l_l(\varepsilon)=\pi_e$ and, for all $v \in \T$, $l_l(v)\geq 2$.
 \item If $v\s \in \T$, then $l_l(v\s) = l_l(v) - 2$, and if $vc \in \T$ for $c \in \omega$, then $l_l(vc) = l_l(v)$.
\end{itemize}

The elements in $l_s(v)$ are called the states, $l_p(v)$ the pure states, and $l_r(v)$ the recurrent states of a node $v$, and $l_l(v)$ is called its level.
Note that the level follows a simple pattern:
the root is labelled with the maximal even priority, $l_l(\varepsilon)=\pi_e$,
the level of natural children is the same as the level of their parents, and the level of a stepchild $v\s$ of a node $v$ is two less than the level of $v$.
For a given maximal even priority $\pi_e$, the level is therefore redundant information that can be reconstructed from the node and $\pi_i$.
For a given set $Q$ and maximal priority $\pi$, $\fnht(Q,\pi)$ denotes the flattened nested history trees over $Q$.
An FNHT is called \emph{full} if the states $l_s(\varepsilon)=Q$ of the root is the full set $Q$.

To include an acceptance mechanism, we enrich FNHTs to \emph{marked flattened tress} (MFTs), which additionally contain a \emph{marker} $v_m$ and a marking set $Q_m$, such that
\begin{itemize}
 \item either $v_m = (\overline{v},r)$ with $\overline{v} \in \T$ is used to mark that we follow a breakpoint construction on the recurrent states, in this case $l_r(\overline{v}) \supseteq Q_m \neq \emptyset$,
 \item or $v_m = (\overline{v},p)$ such that $\overline{v}$ is a leaf in $\T$ is used to mark that we follow a breakpoint construction on the pure states of a leaf $\overline{v}$, in this case $l_p(\overline{v}) \supseteq Q_m \neq \emptyset$. 
\end{itemize}

The marker is used to mark a property to be checked.
For markers $v_m = (\overline{v},r)$, the property is that a particular node would not spawn stable children in a nested history tree \cite{Schewe+Varghese/14/parityDeterminisation}.
As usual in Safra like constructions, this is checked with a breakpoint, where a breakpoint is reached when all children of a node spawned prior to the last breakpoint die.
For markers $v_m = (\overline{v},p)$, the property is that all runs that are henceforth trapped in the pure nodes of $v$ must eventually encounter a priority $l_l(v)-1$. This priority is then dominating, and implies rejection as an odd priority.
We check these properties round robin for all nodes in $\T$, skipping over nodes, where the respective sets $l_r(\overline{v})$ or $l_p(\overline{v})$ are empty, as the breakpoint there is trivially reached immediately.
 
For a given FNHT $(\T,l_s,l_l,l_p,l_r)$, $\next(v_m)$ is a mapping from a marker $v_m$ to a marker/marking set pair $(\overline{v},r),l_r(\overline{v})$ or $(\overline{v},p),l_p(\overline{v})$. The new marker is the first marker after $v_m$ in some round robin order such that the set $l_r(\overline{v})$ or $l_p(\overline{v})$, resp., is non-empty.

If $(\T,l_s,l_l,l_p,l_r)$ is an FNHT and $v_m$ and $Q_m$ satisfy the constraints for markers and marking sets from above, then $(\T,l_s,l_l,l_p,l_r;v_m,Q_m)$ is a marked flattened tree.
For a given set $Q$ and priorities $\Pi$ with maximal priority $\pi = \max \Pi$, $\mft(Q,\pi)$ denotes the marked flattened trees over $Q$.
A marking is called \emph{full} if either $v_m = (\overline{v},r)$ and $Q_m = l_r(\overline{v})$, or $v_m = (\overline{v},p)$ and $Q_m = l_p(\overline{v})$.

\subsection{Construction}
\label{ssec:complement}

For a given nondeterministic parity automaton $\mathcal P = (Q,\Sigma,I,T,\pri:T \rightarrow \Pi)$ with maximal even priority $\pi_e >1$, we construct a nondeterministic B\"uchi automaton $\mathcal C = (Q',\Sigma,\{I\},T',F)$ that recognises the complement language of $\mathcal P$ as follows.
First we set $Q'= Q_1 \cup Q_2$ with $Q_1 = 2^Q$ and $Q_2 = \mft(Q,\pi)$, and $T' = T_1 \cup T_t \cup T_2$, where
\begin{itemize}
 \item $T_1 \subseteq Q_1 \times \Sigma \times Q_1$ are transitions in an initial part $Q_1$ of the states of $\mathcal C$,
 \item $T_t \subseteq Q_1 \times \Sigma \times Q_2$ are transfer transitions that can be taken only once in a run, and
 \item $T_2 \subseteq Q_2 \times \Sigma \times Q_2$, are transitions in a final part $Q_2$ of the states of $\mathcal C$,
\end{itemize}
where $T_1$ and $T_2$ are deterministic.
We first define a transition function $\delta$ for the subset construction and functions $\delta_i$ for all priorities $i \in \Pi$, and then the sets $T_1$, $T_t$, and $T_2$.

\begin{itemize}
 \item $\delta : (S,\sigma) \mapsto \{q \in Q \mid \exists s \in S.\ (s,\sigma,q) \in T\}$,
 
 \item $\delta_i : (S,\sigma) \mapsto \big\{q \in Q \mid \exists s \in S.\ (s,\sigma,q) \in T$ and $\pri\big(\big(s,\sigma,q)\big) \succcurlyeq i\big\}$,

\item $T_1 = \big\{ (S,\sigma,S') \in Q_1 \times \Sigma \times Q_1 \mid S' = \delta(S,\sigma)\big\}$,

where only transitions $(\emptyset,\sigma,\emptyset)$ are accepting.

\item $T_t = \big\{ \big(S,\sigma,(\T,l_s,l_l,l_p,l_r;v_m,Q_m) \big) \in Q_1 \times \Sigma \times Q_2 \mid l_s(\varepsilon) = \delta(S,\sigma)\big\}$ and we have that$(\T,l_s,l_l,l_p,l_r;v_m,Q_m)$ is a marked flattened tree.

\item $T_2 = \big\{ \big((\T,l_s,l_l,l_p,l_r;v_m,Q_m),\sigma,s \big) \in Q_2 \times \Sigma \times Q_2 \mid$
\begin{itemize}
 \item if $v$ is a stepchild, then $l_s''(v) = \delta_{l_l(v)+1}\big(l_s(v),\sigma\big)$
 \item if $v$ is a natural child, then $l_s''(v) = \delta_{l_l(v)-1}\big(l_s(v),\sigma\big)$
 \item if $v$ is a natural child, then $l_r''(v) = \delta_{l_l(v)-1}\big(l_r(v),\sigma\big) \cup \delta_{l_l(v)}\big(l_s(v),\sigma\big)$,

 \item starting at the root, we then define inductively:
\begin{itemize}
 \item $l_s'(\varepsilon) = l_s''(\varepsilon)$, 
  \item if $vc$ is a natural child, then $l_s'(vc) = \big(l_s''(vc) \cap l_s'(v)\big) \smallsetminus \bigcup_{c' < c} l_s''(vc')$,
        $l_r'(vc)= l_r''(vc) \cap l_s'(vc)$, and $l_p'(vc)= l_s'(vc) \smallsetminus l_r'(vc)$, and
 \item if $v\s$ is a stepchild, then $l_s'(v\s) = l_p'(v)$.
\end{itemize}

 \item if one exists, we extend the functions to obtain the unique FNHT $(\T,l_s',l_l,l_p',l_r')$ \hfill (otherwise $\mathcal C$ blocks) 
 %\pagebreak
 
 \item if $v_m = (\overline{v},r)$ then $Q_m' = \delta_{l_l(\overline{v})-1}(Q_m,\sigma) \cap l_r'(\overline{v})$, and

       if $v_m = (\overline{v},p)$ then $Q_m' = \delta_{l_l(\overline{v})-3}(Q_m,\sigma) \cap l_p'(\overline{v})$,
       
 \item if $Q_m' = \emptyset$, then the transition is accepting and $s = \big(\T,l_s',l_l,l_p',l_r';\next(v_m)\big)$,
 
 \item if $Q_m' \neq \emptyset$, then the transition is not accepting and $s = (\T,l_s',l_l,l_p',l_r';v_m,Q_m')$.
\end{itemize}
\end{itemize}

\subsection{Correctness}
\label{ssec:correct}
To show that $\mathcal L(\mathcal C)$ is the complement of $\mathcal L(\mathcal P)$, we first show that a word accepted by $\mathcal C$ is rejected by $\mathcal P$ and then, vice versa, that a word accepted by $\mathcal P$ is rejected by $\mathcal C$.

\begin{lemma}
If $\mathcal C$ has an accepting run on $\alpha$, then $\mathcal P$ rejects $\alpha$.
\end{lemma}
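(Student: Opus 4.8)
The plan is to argue by contradiction. Suppose $\mathcal C$ has an accepting run $r$ on $\alpha$ and, contrary to the claim, $\mathcal P$ has an accepting run $\rho=q_0q_1q_2\cdots$ as well; let $2h=\limsup_{n\to\infty}\pri(\overline\rho(n))$ be the (even) priority dominating $\rho$. First I would dispose of the possibility that $r$ stays in $Q_1$ forever: since $T_1$ is the deterministic subset construction started in $I$, the state $r(j)$ equals the set of states $\mathcal P$ can be in after $\alpha_0\cdots\alpha_{j-1}$, so $q_j\in r(j)$ and in particular $r(j)\neq\emptyset$ for every $j$; but the only accepting transitions inside $Q_1$ are of the form $(\emptyset,\sigma,\emptyset)$ and $\emptyset$ is a sink, so a run confined to $Q_1$ cannot be accepting. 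Hence $r$ takes a transfer transition from $T_t$ at some step $k$ and from then on moves inside $Q_2$, and since $T_2$ never changes the tree component, it stays on one fixed FNHT $\mathcal T$ with $l_s,l_p,l_r$ and the marker $(v_m,Q_m)$ evolving deterministically. Fix $k'\ge k$ past which every transition priority of $\rho$ is $\le 2h$.

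The heart of the argument is to track $\rho$ inside $\mathcal T$ and show its position stabilises. From $T_t$ and the fact that $\delta_{\pi_e+1}=\delta$ (every priority of $\mathcal P$ is $\le\pi_m\le\pi_e+1$) one gets $q_j\in l_s(\varepsilon)$ for all $j\ge k$ — the root always carries the reachable set — and since $l_s$ of a node contains $l_s$ of each of its children, $\{v\in\mathcal T\mid q_j\in l_s(v)\}$ is the chain of ancestors of a unique deepest node $v_j$. I claim $v_j$ is eventually constant, say $v_j=\hat v$ for all $j\ge K$. The reason is that, for $j\ge k'$, the label updates are one-directional: at a branching stepchild $w$ with $l_l(w)\ge 2h$ the state $q_j$ can move from a natural child of $w$ to an older one but not to a younger one, since the latter would need a transition of odd priority $\ge l_l(w)+1>2h$; once $q_j$ enters a node of level $\ge 2h$ it can never leave that subtree again, since leaving the $l_s$-set of a natural child at level $2m$ needs an odd priority $\ge 2m+1$; and once $q_j$ becomes recurrent at a node of level exactly $2h$ it stays recurrent, while the priority-$2h$ transitions, which occur infinitely often, keep pulling it back to that recurrent position whenever it would otherwise drift pure. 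Combining these, the path to $v_j$ is pinned down after finitely many steps, and $\hat v$ has level $\ge 2h$. I expect this to be the main obstacle: one must spell out precisely which priorities move a state between a natural child and a stepchild and between the pure and recurrent parts of a node, and in particular rule out infinitely many ``descents'' of $\rho$ into deeper subtrees caused by $\rho$ merging with branches that $\mathcal C$ tracks more precisely — the point being that such a descent is irreversible once the level reached is $\ge 2h$, so there are only finitely many of them.

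It then remains to derive the contradiction from $\hat v$, using two standard facts about $r$: being accepting, it must reach a breakpoint at whichever marker it eventually settles on (otherwise it is stuck there and takes only finitely many accepting transitions), and, by the round-robin discipline of $\next$, every marker whose marking set is non-empty cofinitely is visited infinitely often. A short case analysis on $\hat v$ produces a marker $\mu$ attached to $\hat v$ — namely $(\hat v,r)$ in general, but $(\hat v,p)$ when $\hat v$ is a leaf at which $q_j$ is eventually pure — whose marking set contains $q_j$ cofinitely: for $(\hat v,r)$ because $q_j\in l_r(\hat v)$ (which holds when $\hat v$ is a stepchild, a non-leaf natural child where $l_s(\hat v\s)=l_p(\hat v)$ excludes $q_j$, or a leaf natural child at which $q_j$ is eventually recurrent), and for $(\hat v,p)$ because $q_j\in l_p(\hat v)$. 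Thus $\mu$ is visited infinitely often, and each time $r$ reaches it $r$ must reach a breakpoint at $\mu$; but a breakpoint at $\mu$ forces $q_j$ out of the marking set $Q_m$, and stability of $v_j$ at $\hat v$ shows that, after $K$, $q_j$ can be driven out of $Q_m$ only by a transition of one fixed odd priority $p_0$ — equal to $l_l(\hat v)+1$ for an $r$-marker and to $l_l(\hat v)-1$ for a $p$-marker — and also shows that every transition priority of $\rho$ after $K$ is $\le p_0$. Consequently $\rho$ takes priority $p_0$ infinitely often, and since all its priorities are eventually $\le p_0$, this gives $\limsup_{n\to\infty}\pri(\overline\rho(n))=p_0$, which is odd, contradicting that $\rho$ is accepting. (If $\rho$ in fact never takes $p_0$ after $K$ — the typical situation when $\hat v$ is a natural child, since then the constraints force all priorities of $\rho$ strictly below $p_0$ — then $\mu$ never reaches a further breakpoint, so $r$ is not accepting, which is equally a contradiction.) In all cases the assumption that both automata accept $\alpha$ collapses, so $\mathcal P$ rejects $\alpha$.
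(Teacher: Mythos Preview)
Your proposal is correct and follows essentially the same strategy as the paper: dispose of the $Q_1$-only case via the subset construction, then assume an accepting run $\rho$ of $\mathcal P$ exists, track $q_j$ through the fixed tree $\mathcal T$ via a ``deepest node'' sequence $v_j$, and derive a contradiction from the breakpoint mechanism at the node where $\rho$ settles.

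The only organisational difference is that the paper does an explicit case split on whether $v_j$ stabilises: in the stable case it runs the breakpoint argument you give, and in the non-stable case it shows directly that infinitely many transitions of $\rho$ must carry an odd priority $\geq l_l(v)+1$ (where $v$ is the longest eventually-common prefix of the $v_j$), forcing the $\limsup$ to be odd. You instead argue stabilisation up front, using that after $k'$ all priorities of $\rho$ are $\leq 2h$ so the ``bad'' moves (leaving a child, or moving to a sibling of larger index) are forbidden at every level $\geq 2h$; this is exactly the contrapositive of the paper's second case, so the two routes are logically equivalent. Your version is arguably cleaner once the monotonicity of the child index and the absorbing nature of $l_r$ at level $2h$ are spelled out, while the paper's split avoids having to prove stabilisation in full. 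One cosmetic point: in your final paragraph the case ``$\rho$ takes priority $p_0$ infinitely often'' is actually vacuous (stability at $\hat v$ forces all priorities strictly below $p_0$), so the contradiction always comes from your parenthetical observation that the marker gets stuck at $\mu$ and $r$ is not accepting; you may want to present that as the main line rather than as an aside.
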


\begin{proof}
Let $\rho = S_0 S_1 \ldots$ be an accepting run of $\mathcal C$ on $\alpha$ that stays in $Q_1$.
Thus, there is an $i \in \omega$ such that, for all $j\geq i$, $S_j = \emptyset$.
But if we consider any run $\rho' = q_0 q_1 q_2 \ldots$ of $\mathcal P$ on $\alpha$, then it is easy to show by induction that
$q_k \in S_k$ holds for all $k \in \omega$, which contradicts $S_i = \emptyset$; that is, in this case $\mathcal P$ has no run on $\alpha$.

Let us now assume that $\rho = S_0 S_1 \ldots S_i s_{i+1} s_{i+2} \ldots$ is an accepting run of $\mathcal C$ on $\alpha$, where $(S_i,\alpha_i,s_{i+1}) \in T_t$ is the transfer transition taken.
(Recall that runs of $\mathcal C$ must either stay in $Q_1$ or contain exactly one transfer transition.)

Let us assume for contradiction that $\mathcal P$ has an accepting run $\rho' = q_0 q_1 q_2 \ldots$ with even dominating priority $e = \limsup_{j\rightarrow \infty} \pri\big((q_j,\alpha_j,q_{j+1})\big)$.
Let, for all $j > i$, $s_j = (\T,l_s^j,l_l,l_p^j,l_r^j;v_m^j,Q_m^j\big)$ and $S_j = l_s^j(\varepsilon)$.
It is again easy to show by induction that $q_j \in S_j$ for all $j\in \omega$.
Let now $v_j \in \T$ be the longest node with $l_l^j(v_j) \geq e$ and $q_j \in l_s^j(v_j)$.
Note that such a node exists, as $q_j \in S_j = l_s^j(\varepsilon)$ holds.
We now distinguish the two cases that the $v_j$ do and do not stabilise eventually.
\smallskip

\noindent{\bf First case.}\hspace{1ex} Assume that there are an $i'>i$ and a $v \in \T$ such that, for all $j\geq i'$, $v_j = v$.
We choose $i'$ big enough that $\pri(q_{j-1},\alpha_{j-1},q_j) \succ e + 1$ holds for all $j \geq i'$.

\textbf{If $v$ is a stepchild}, then $q_j \in l_r^j(v)$ for all $j\geq i'$.
Using the assumption that $\rho$ is accepting, there is an $i''>i'$ such that $(s_{i'' - 1},\alpha_{i'' - 1},s_{i''})$ is accepting, and $v_m^{i''} = (v,r)$.
(Note that $q_{i''} \in l_r^{i''}(v)$ implies $l_r^{i''}(v) \neq \emptyset$.)
But then we have $q_{i''} \in Q_m^{i''}= l_r^{i''}(v)$, and an inductive argument provides $(s_j,\alpha_j,s_{j+1})\notin F$ and $q_j \in Q_m^j$ for all $j\geq i''$.  
This contradicts that $\rho$ is accepting.

\textbf{If $v$ is a natural child}, then we distinguish three cases. The first one is that there is a $j' \geq i'$ such that $q_{j'} \in l_r^{j'}(v)$.
Then we can show by induction that $q_j \in l_r^j(v)$ for all $j\geq j'$ and follow the same argument as for stepchildren, using $i'' > j'$.

The second is that $q_j \in l_p^j(v)$ holds for all $j\geq i'$.
There are now again a few sub-cases that each lead to contradiction.
The first is that $l_l(v) = e$. But in this case, we can choose a $j > i'$ with $\pri\big((q_j,\alpha_j,q_{j+1})\big)=e$ and get $q_{j+1} \in l_r^{j+1}(v)$ (contradiction).
The second is that $l_l(v) > e$ and $v$ is not a leaf.
But in that case, $l_l(v\s) \geq e$ holds and $q_j \in l_p^j(v)$ implies $q_j \in l_p^j(v\s)$, which contradicts the maximality of $v$.
Finally, if $l_l(v) > e$ and $v$ is a leaf of $\T$, we get a similar argument as for stepchildren:
Using the assumption that $\rho$ is accepting, there is an $i''>i'$ such that $(s_{i'' - 1},\alpha_{i'' - 1},s_{i''})$ is accepting, and $v_m^{i''} = (v,p)$.
(Note that $q_{i''} \in l_p^{i''}(v)$ implies $l_p^{i''}(v) \neq \emptyset$.)
But then we have $ q_{i''} \in Q_m^{i''}= l_p^{i''}(v)$, and an inductive argument provides $(s_j,\alpha_j,s_{j+1})\notin F$ and $q_j \in Q_m^j$ for all $j\geq i''$.  
This contradicts that $\rho$ is accepting.
\smallskip

\noindent{\bf Second case.}\hspace{1ex}
Assume that the $v_j$ do not stabilise. Let $v$ be the longest sequence such that $v$ is an initial sequence of almost all $v_j$, and let $i'>i$ be an index such that $v$ is an initial sequence of $v_j$ for all $j \geq i'$. Note that $q_j$ is in $l_s(v_j')$ for all ancestors $v_j'$ of $v_j$.

First, we assume for contradiction that there is a $j > i'$ with $\pri\big((q_j,\alpha_j,q_{j+1})\big) = e' \succ l_l(v)$ (note that the `better than' relation implies that $e' > l_l(v)$ is even).
Then we select a maximal ancestor $v'$ of $v$ with $l_l(v')=e'$; note that such an ancestor is a natural child, as a stepchild has only natural children, and all of them have the same level.

As $v'$ is an ancestor of $v_j$ and $v_{j+1}$, $q_j \in l_s^j(v')$ and $q_{j+1} \in l_s^{j+1}(v')$ hold, and by the transition rules thus imply $q_{j+1} \in l_r^{j+1}(v')$, which contradicts $q_{j+1} \in l_s^{j+1}(v_{j+1})$.
(Note that $l_l(v') > l_l(v) \geq l_l(v_{j+1})$ holds.)

Second, we show that $\pri\big((q_j,\alpha_j,q_{j+1})\big) \preccurlyeq l_l(v) + 1$ holds infinitely many times.
For this, we first note that the non-stability of the sequence of $v_j$-s implies that at least one of the following three events happen for infinitely many $j>i'$.

\begin{enumerate}
\item $v$ is a stepchild, $q_j \in l_s^j(vc)$ for some child $vc$ of $v$, but, for all children $vc'$ of $v$, $q_{j+1} \notin l_s^{j+1}(vc')$,
\item $v$ is a stepchild, $q_j \in l_s^j(vc)$ for some child $vc$ of $v$, and $q_{j+1} \in l_s^{j+1}(vc')$ for some older sibling $vc'$ of $vc$, that is, for $c'>c$, or
\item $v$ is a natural child, $q_j \notin l_s^j(v\s)$, but $q_{j+1} \in l_s^{j+1}(v\s)$.
\end{enumerate}

Note that this is just the counter position to ``$v_j$ stabilises or $v$ is not maximal''.
In all three cases, the definition of $T_2$ requires that $\pri\big((q_j,\alpha_j,q_{j+1})\big) \preccurlyeq l_l(v)+1$.

As the first observation implies that there may only be finitely many transitions with even priority $> l_l(v)$ and the second observation implies that there are infinitely many transitions in $\rho'$ with odd priority $> l_l(v)$, they together imply that $\limsup_{j\rightarrow \infty} \pri\big((q_j,\alpha_j,q_{j+1})\big)$ is odd, which leads to the final contradiction. 
\end{proof}

\begin{lemma}
If $\mathcal P$ has an accepting run on $\alpha$, then $\mathcal C$ rejects $\alpha$.
\end{lemma}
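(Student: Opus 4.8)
The plan is to show that every run of $\mathcal C$ on $\alpha$ either stays in $Q_1$ forever (hence is non-accepting, since $\delta$-images of nonempty sets stay nonempty once $\mathcal P$ has a run, so $S_j\neq\emptyset$ for all $j$) or takes a transfer transition at some point $i$ and then, in $Q_2$, fails the \buchi\ acceptance condition. So fix an accepting run $\rho'=q_0q_1q_2\cdots$ of $\mathcal P$ with even dominating priority $e$, and fix an index $i_e$ such that $\pri\big((q_{j},\alpha_{j},q_{j+1})\big)\preccurlyeq e$ for all $j\geq i_e$ and $\pri\big((q_{j},\alpha_{j},q_{j+1})\big)=e$ for infinitely many $j$. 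Consider any run $\rho$ of $\mathcal C$ that transfers into $Q_2$ at some time $i$; write $s_j=(\T,l_s^j,l_l,l_p^j,l_r^j;v_m^j,Q_m^j)$ for $j>i$. The tree shape $\T$ and the level function $l_l$ are invariant along $T_2$, so I may speak of a fixed $\T$ and fixed levels.

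The heart of the argument is to track the position of $q_j$ in the tree. First I would show that the transition rules for $\delta_k$ force the following monotonicity: once $q_j$ sits in some node $v$ with $l_l(v)\geq e$ at a time $j\geq i_e$, then at time $j+1$ either $q_{j+1}$ is still in $v$ (possibly having moved from $l_p^j(v)$ into $l_r^{j+1}(v)$ when a transition of priority exactly $l_l(v)$ or an even priority was used — here one uses $l_r''(v)=\delta_{l_l(v)-1}(l_r(v),\sigma)\cup\delta_{l_l(v)}(l_s(v),\sigma)$), or $q_{j+1}$ has moved to a \emph{younger} sibling subtree or \emph{deeper} into the stepchild of $v$ — never to an ancestor or older sibling, because the priorities $\preccurlyeq e$ are never good enough to be filtered out by $\delta_{l_l(v)+1}$ at a stepchild of level $<e$ or by $\delta_{l_l(v)-1}$ at a natural child of level $\leq e$. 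Consequently the sequence of "deepest node containing $q_j$ with level $\geq e$", call it $w_j$, is eventually monotone in the tree order in a way that forces it to stabilise (the tree is finite, moves are downward-or-rightward, and $\T$ is order-closed so there are only finitely many siblings). Let $w$ be its stable value and $i'\geq i_e$ an index past which $w_j=w$ and past which no priority $\succ e+1$ occurs.

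Now I split on what $w$ is and what set of $w$ holds $q_j$, mirroring the case analysis in the previous lemma but in the "positive" direction. If $w$ is a natural child with $l_l(w)=e$: since priority $e$ occurs infinitely often and $q_j\in l_s^j(w)$, the rule for $l_r''$ puts $q_{j+1}\in l_r^{j+1}(w)$ infinitely often; combined with monotonicity of $Q_m$ under $\delta_{l_l(w)-1}$ once the marker is $v_m=(w,r)$, every breakpoint on $w$ is eventually passed, so the acceptance condition on the $(w,r)$-marker is met infinitely often — but that is exactly the \emph{accepting} behaviour of $\mathcal C$, which would be a contradiction to "$\mathcal C$ rejects", so I must instead argue this case \emph{cannot persist}: I claim $q_j$ must in fact descend into the stepchild $w\s$ (which has level $e$ as well when $w$ is not a leaf) or into a younger natural child, contradicting maximality/stability of $w$ — the point being that once $q_{j+1}\in l_r^{j+1}(w)$ forces, via $l_s'(w\s)=l_p'(w)$ and the reorganisation rules, a genuine relocation. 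If $w$ is a stepchild with $l_l(w)\geq e$, or a natural child that is a leaf with $l_l(w)>e$ and $q_j\in l_p^j(w)$ cofinally, the symmetric analysis shows the relevant breakpoint set ($l_r$ of the stepchild, resp. $l_p$ of the leaf) drains infinitely often because every surviving state keeps being fed new states of priority $\leq e$; again this yields infinitely many accepting transitions of $\mathcal C$ — contradiction. The remaining escape, that $q_j$ keeps escaping to ever-younger siblings, is excluded by order-closedness of $\T$ (finitely many siblings) together with the fact that a child of a stepchild can only be abandoned when a transition of priority $\preccurlyeq l_l(w)+1\leq e+1$ is read — but we have no such odd priority past $i'$.

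The main obstacle I anticipate is the bookkeeping for the marker/marking-set round robin: one has to show that along \emph{any} $Q_2$-run the $\next$-pointer cannot get "stuck" skipping the node $w$ (because $l_r^j(w)$ or $l_p^j(w)$ is nonempty cofinally, the round robin must visit it infinitely often), and then that from the first visit onward the marking set is never emptied unless a true breakpoint occurs — but a true breakpoint \emph{is} the accepting event, so the contradiction is that $\mathcal C$'s run is accepting exactly when $\mathcal P$'s run should make it reject. I would resolve this by the clean dichotomy: either $w$'s breakpoint set drains infinitely often (then $\mathcal C$ accepts, contradicting the assumption that we are analysing a run of the *complement* — no, rather: then I have exhibited that $\mathcal P$'s accepting run forces $\mathcal C$ into an accepting run, which is the contradiction we want only if $\mathcal C$ rejecting is the goal). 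Concretely the final contradiction is obtained by showing that if some run of $\mathcal C$ \emph{were} accepting then, feeding it the accepting $\rho'$ of $\mathcal P$, the node $w$ would have to be simultaneously stable and unstable; so no run of $\mathcal C$ is accepting, i.e. $\mathcal C$ rejects $\alpha$. I expect the stepchild-vs-natural-child level arithmetic ($l_l(w\s)=l_l(w)-2$ versus $l_l(vc)=l_l(v)$) and correctly matching it against $e$ to be the fiddliest part, but it is exactly the dual of the computation already carried out in the proof of the preceding lemma.
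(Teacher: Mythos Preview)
Your proposal has the right skeleton (handle $Q_1$ trivially; in $Q_2$, track where $q_j$ lives in the fixed tree $\T$ and argue that it eventually stabilises in some node $w$), and this matches the paper's approach. But the core mechanism that produces the contradiction is inverted, and this is a genuine gap, not just a bookkeeping issue.

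The paper's argument runs as follows. One \emph{assumes} the run of $\mathcal C$ is accepting and derives a contradiction. Because it is accepting, accepting transitions occur infinitely often, hence the round-robin marker visits every eligible node infinitely often. Now one shows that $q_j$ eventually gets trapped in $l_r^j(w)$ (if $w$ is a stepchild, or a natural child with $l_l(w)=e$, or a natural child where the first option occurs) or in $l_p^j(w)$ (if $w$ is a leaf with $l_l(w)>e$). At the first subsequent moment $i''$ when the marker sits on the matching pair $(w,r)$ or $(w,p)$ with full marking set, one has $q_{i''}\in Q_m^{i''}$, and the update rule $Q_m'=\delta_{\ldots}(Q_m,\sigma)\cap l_r'(w)$ (resp.\ $\cap\, l_p'(w)$) together with the trapping of $q_j$ gives $q_j\in Q_m^j$ for all $j\ge i''$. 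Hence $Q_m^j\neq\emptyset$ forever, the marker never advances, and there are \emph{no further accepting transitions}: contradiction.

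You argue the opposite direction: that the breakpoint set ``drains infinitely often'', i.e.\ that $Q_m$ empties, yielding infinitely many accepting transitions of $\mathcal C$. That is precisely what does \emph{not} happen once $q_j$ is trapped in $l_r^j(w)$ (or $l_p^j(w)$): the state $q_j$ is a persistent witness that $Q_m^j\neq\emptyset$. Your attempted repair in the $l_l(w)=e$ case (forcing $q_j$ to descend into $w\s$) cannot work, since $q_j\in l_r^j(w)$ means $q_j\notin l_p^j(w)=l_s^j(w\s)$; and the final fallback contradiction (``$w$ simultaneously stable and unstable'') is not what the argument yields. You also need the assumption that the $\mathcal C$-run is accepting right from the start, not merely ``consider any run'': without it you cannot conclude that the marker ever reaches $(w,r)$ or $(w,p)$ in the first place. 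Once you flip the direction of the breakpoint argument and set up the contradiction hypothesis explicitly, your stabilisation sketch aligns with the paper's inductive descent through $\T$.
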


\begin{proof}
Let $\rho = q_0 q_1 q_2 \ldots$ be an accepting run of $\mathcal P$ on $\alpha$ with even dominating priority $e = \limsup_{j\rightarrow \infty} \pri\big((q_j,\alpha_j,q_{j+1})\big)$.

Let us first assume for contradiction that $\mathcal C$ has an accepting run $\rho' = S_0 S_1 \ldots$ which is entirely in $Q_1$.
It is then easy to show by induction that $q_i \in S_i$ holds for all $i \in \omega$, such that no transition of $(S_i,\alpha_i,S_{i+1})$ is accepting.

Let us now assume for contradiction that $\mathcal C$ has an accepting run $\rho' = S_0 S_1 \ldots S_i s_{i+1} s_{i+2} \ldots$, where $(S_i,\alpha_i,s_{i+1}) \in T_t$ is the transfer transition taken.
(Recall that runs of $\mathcal C$ must either stay in $Q_1$ or contain exactly one transfer transition.)
Let further
$s_j = (\T,l_s^j,l_l,l_p^j,l_r^j;v_m^j,Q_m^j\big)$ and $S_j = l_s^j(\varepsilon)$ for all $j > i$.

It is easy to show by induction that, for all $j \in \omega$, $q_j \in S_j$ holds.
We choose an $i_\varepsilon > i$ such that, for all $k\geq i_\varepsilon$, $\pri\big((q_{k-1},\alpha_{k-1},q_k)\big) \leq e$ holds.

Let us now look at the nodes $v \in \T$, such that $q_j \in l_s^j(v)$, where $j \geq i_\varepsilon$.
\smallskip

\noindent{\bf Construction basis.}\hspace{1ex}
We have already shown $q_j \in S_j = l_s^j(\varepsilon)$ for all $j > i$, and thus in particular for all $j \geq i_\varepsilon$.
\smallskip

\noindent{\bf Construction step.}\hspace{1ex}
If, for some \textbf{stepchild} $v\in \T$ with $l_l(v) \geq e$ and some $i_v \geq i_\varepsilon$, it holds for all $j \geq i_v$ that $q_j \in l_s^j(v)$, then the following holds for all $j \geq i_v$:
if $v'\in \nc(v)$ is a natural child of $v$ and $q_j \in l_s^j(v')$, then either $q_{j+1} \in l_s^{j+1}(v')$, or there is a younger sibling $v''$ of $v'$ in $\T$ such that $q_{j+1} \in l_s^{j+1}(v'')$.

As transitions to younger siblings can only occur finitely often without intermediate transitions to older siblings, we have one of the following two cases:
\begin{enumerate}
 \item for all $j \geq i_v$, $q_j \in l_s^j(v)$, but for every natural child $v'$ of $v$, $q_j \notin l_s^j(v')$, or
 \item there is a natural child $v'$ of $v$ and an index $i_{v'} \geq i_v$ such that, for all $j \geq i_{v'}$, $q_j \in l_s^j(v')$.
\end{enumerate}
As $v$ is a stepchild, the first case implies that $q_j \in l_r^j(v)$ for all $j \geq i_v$.
However, using the assumption that $\rho'$ is accepting, there is an $i_v'>i_v$ such that $(s_{i_v' - 1},\alpha_{i_v' - 1},s_{i_v'})$ is accepting, and $v_m^{i_v'} = (v,r)$, as the marker is circulating in a round robin fashion.
(Note that $q_{i_v'} \in l_r^{i_v'}(v)$ implies $l_r^{i_v'}(v) \neq \emptyset$.)
But then we have $q_{i_v'} \in Q_m^{i_v'}= l_r^{i_v'}(v)$, and an inductive argument provides $(s_j,\alpha_j,s_{j+1})\notin F$ and $q_j \in Q_m^j$ for all $j\geq i_v'$. 

In the second case, we continue with $v'$ and the index $i_{v'}$.
\smallskip

If, for some \textbf{natural child} $v\in \T$ with $l_l(v) > e$ and some $i_v \geq i_\varepsilon$, it holds for all $j \geq i_v$ that $q_j \in l_s^j(v)$, then one of the following holds.
\begin{enumerate}
 \item There is an $i_v' \geq i_v$ such that $q_{i_v'} \in l_r^{i_v'}(v)$.
 \item For all $j \geq i_v$, $q_j \in l_p^j(v)$.
\end{enumerate}

In the first case, it is easy to show by induction that $q_j \in l_r^j(v)$ holds for all $j \geq i_{v'}$.
We can then again use the assumption that $\rho'$ is accepting.
Consequently, there is an $i_v''>i_v'$ such that $(s_{i_v'' - 1},\alpha_{i_v'' - 1},s_{i_v''})$ is accepting, and $v_m^{i_v''} = (v,r)$, as the marker is circulating in a round robin fashion.
(Note that $q_{i_v''} \in l_r^{i_v''}(v)$ implies $l_r^{i_v''}(v) \neq \emptyset$.)
But then we have again $q_{i_v''} \in Q_m^{i_v''}= l_r^{i_v''}(v)$, and an inductive argument provides $(s_j,\alpha_j,s_{j+1})\notin F$ and $q_j \in Q_m^j$ for all $j\geq i_v''$. 

In the second case, if $v$ is not a leaf, then it holds for all $j \geq i_{v\s} = i_v$ that $q_j \in l_s^j(v\s)$, and we can continue with $v\s$.
If $v$ is a leaf, we again use the assumption that $\rho'$ is accepting.
Consequently, there is an $i_v'>i_v$ such that $(s_{i_v' - 1},\alpha_{i_v' - 1},s_{i_v'})$ is accepting, and $v_m^{i_v'} = (v,p)$, as the marker is circulating in a round robin fashion.
(Note that $v$ is a leaf and that $q_{i_v'} \in l_p^{i_v'}(v)$ implies $l_p^{i_v'}(v) \neq \emptyset$.)
But then we have $q_{i_v'} \in Q_m^{i_v'}= l_p^{i_v'}(v)$, and an inductive argument provides $(s_j,\alpha_j,s_{j+1})\notin F$ and $q_j \in Q_m^j$ for all $j\geq i_v'$. 
\smallskip

If, for some \textbf{natural child} $v\in \T$ with $l_l(v) = e$ and some $i_v \geq i_\varepsilon$, it holds for all $j \geq i_v$ that $q_j \in l_s^j(v)$, then there is, by the definition of $e$, a $j>i_v$ with $\pri(q_{j-1},\alpha_{j_1},q_j) = e$.
But then $q_{j-1} \in l_s^{j-1}(v)$ and $q_j \in l_s^j(v)$ imply $q_j \in l_r^j(v)$.
It is then easy to establish by induction that $q_{j'} \in l_r^{j'}(v)$ for all $j' \geq j$.
We can then again use the assumption that $\rho'$ is accepting.
Consequently, there is a $j'> j$ such that $(s_{j' - 1},\alpha_{j' - 1},s_{j'})$ is accepting, and $v_m^{j'} = (v,r)$, as the marker is circulating in a round robin fashion.
(Note that $q_{j'} \in l_r^{j'}(v)$ implies $l_r^{j'}(v) \neq \emptyset$.)
But then we have again $q_{j'} \in Q_m^{j'}= l_r^{j'}(v)$, and an inductive argument provides $(s_k,\alpha_k,s_{k+1})\notin F$ and $q_k \in Q_m^k$ for all $k\geq j'$. 
\smallskip

\noindent{\bf Contradiction.}\hspace{1ex}
As the level is reduced by two every second step, one of the arguments that contradict the assumption that $\rho'$ is accepting is reached in at most $\pi_e$ steps.
\end{proof}

\begin{corollary}
\label{cor:correct}
$\mathcal C$ recognises the complement language of $\mathcal P$.
\qed
\end{corollary}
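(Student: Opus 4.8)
The corollary asserts the language equality $\mathcal L(\mathcal C) = \Sigma^\omega \setminus \mathcal L(\mathcal P)$. The two preceding lemmas already settle one inclusion, $\mathcal L(\mathcal C) \subseteq \Sigma^\omega \setminus \mathcal L(\mathcal P)$ (equivalently, the disjointness of $\mathcal L(\mathcal C)$ and $\mathcal L(\mathcal P)$). The real work left is therefore the reverse inclusion, and my plan is to prove it by constructing, for an arbitrary $\alpha \notin \mathcal L(\mathcal P)$, an accepting run of $\mathcal C$ on $\alpha$ explicitly.

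First I would dispose of the degenerate case: if the subset construction $S \mapsto \delta(S,\sigma)$ started from $I$ reaches $\emptyset$ at some point along $\alpha$, the run of $\mathcal C$ that stays in $Q_1$ is accepting, since from then on it takes only accepting $(\emptyset,\sigma,\emptyset)$-transitions. Otherwise I would exploit that, because $\mathcal P$ rejects $\alpha$, the run of the Safra-style parity determinisation of \cite{Schewe+Varghese/14/parityDeterminisation} on $\alpha$ eventually stabilises: past some index the set of nodes is fixed and no node ever spawns a new stable child. I would let $\mathcal C$ guess this index $i$, run in $Q_1$ up to $i$, and take the transfer transition $T_t$ into the flattened tree obtained from the nested history tree at time $i$, equipped with a full marking placed on the first node (in the round-robin order) whose recurrent set, or pure set if it is a leaf, is non-empty. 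From $i$ on, the inductively defined updates $l_s',l_r',l_p'$ of $T_2$ reproduce the dynamics of the determinisation, so the $T_2$-transitions never block and the FNHT part of the run stays in correspondence with the stabilised nested history tree.

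It then remains to show that this run visits an accepting transition of $T_2$ infinitely often. The key point is that, because $\mathcal P$ rejects $\alpha$, whichever obligation the marker currently carries is discharged after finitely many steps: for a marker $(\overline v,r)$ the marked recurrent states die out — a breakpoint, since the stabilised node $\overline v$ has no infinite stable sub-family — and for a marker $(\overline v,p)$ at a leaf, the runs trapped in the pure states of $\overline v$ must meet priority $l_l(\overline v)-1$, so the marking set is emptied. Whenever this happens $Q_m' = \emptyset$, the transition is accepting, and $\next(v_m)$ advances the marker; since $\T$ is finite and the marker cycles round robin (skipping nodes with trivially empty obligation), every node is revisited and cleared infinitely often. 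Together with the first inclusion this proves the corollary.

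I expect the completeness direction to be the main obstacle, and within it the no-blocking claim together with the verification that each marker obligation is genuinely met infinitely often: this requires pinning down the exact correspondence between the FNHT updates of $T_2$ and the nested history tree updates of \cite{Schewe+Varghese/14/parityDeterminisation}, and the characterisation of rejection by $\mathcal P$ as eventual stability of that determinisation. The soundness inclusion, by contrast, is already in hand as a lemma.
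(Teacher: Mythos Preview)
In the paper the corollary carries only a \qed: it is treated as immediate from the two preceding lemmas, which are announced as ``a word accepted by $\mathcal C$ is rejected by $\mathcal P$'' and, ``vice versa, a word accepted by $\mathcal P$ is rejected by $\mathcal C$.'' You are right that these two statements are contrapositives of one another and together yield only the disjointness $\mathcal L(\mathcal C)\cap\mathcal L(\mathcal P)=\emptyset$; the reverse inclusion $\Sigma^\omega\setminus\mathcal L(\mathcal P)\subseteq\mathcal L(\mathcal C)$ is never argued anywhere in the paper. So your proposal in fact goes beyond what the paper's own proof supplies.

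Your sketch for the missing direction is the natural one and is sound in outline: dispose of the case where the subset run dies, and otherwise appeal to the eventual stabilisation of the nested-history-tree determinisation of \cite{Schewe+Varghese/14/parityDeterminisation} to choose the transfer index and the FNHT to enter, then argue that on a word rejected by $\mathcal P$ every breakpoint obligation (recurrent, or pure at a leaf) is eventually discharged, so the marker cycles round robin and accepting $T_2$-transitions recur infinitely often. The points you flag as delicate---that the $T_2$ updates match the determinisation's tree updates so $\mathcal C$ never blocks, and that each marker obligation is genuinely met---are precisely where the substance of a full completeness proof would lie; the paper offers nothing here to compare against.
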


\subsection{Lower bound and tightness}
\label{ssec:lower}

In order to establish a lower bound, we use a sub-language of the full automaton $\mathcal P_n^\Pi$, and show that an automaton that recognises it must have at least as many states as there are full FNHTs in $\fnht(Q,\pi)$ for $n=|Q|$ and $\pi = \max \Pi$.

To show this, we define two letters for each full FNHT $t =(\T,l_s,l_l,l_p,l_r) \in \fnht(Q,\pi)$.
$\beta_t: Q \times Q \rightarrow 2^\Pi$ is the letter where:
\begin{itemize}
 \item if $v$ is a stepchild and $p,q \in l_s(v)$, then $l_l(v){+}1 \in \beta_t(p,q)$ (provided $l_l(v){+}1 {\,\in\,} \Pi$),
 \item if $v$ is a stepchild, $p \in l_r(v)$, and $q \in l_s(vc)$ for some $c \in \omega$, then $l_l(v) \in \beta_t(p,q)$,
 \item if $v$ is a stepchild, $c,c' \in \omega$, $c<c'$, $vc' \in \T$, $p \in l_s(vc')$, and $q \in l_s(vc)$, then $l_l(v) \in \beta_t(p,q)$,
 \item if $v$ is a natural child, $p \in l_p(v)$, and $q \in l_r(v)$ then $l_l(v) \in \beta_t(p,q)$.
 \item if $v$ is a natural child and $p,q \in l_r(v)$, then $l_l(v)-1 \in \beta_t(p,q)$, and
 \item if $v$ is a natural child and $p,q \in l_p(v)$, then $l_l(v)-1 \in \beta_t(p,q)$.
\end{itemize}

\noindent 
$\gamma_t: Q \times Q \rightarrow 2^\Pi$ is the letter where $i \in \gamma_t(p,q)$ if $i \in \beta_t(p,q)$ and additionally:
\begin{itemize}
 \item if $v$ is a natural child, $l_l(v)-2 \in \Pi$, and $p,q \in l_r(v)$, then $l_l(v)-2 \in \gamma_t(p,q)$,
 \item if $v$ is a stepchild and $p,q \in l_r(v)$, then $l_l(v) \in \gamma_t(p,q)$, and
 \item if $v$ is a natural child, $l_l(v)-2 \in \Pi$, and $p,q \in l_p(v)$, then $l_l(v)-2 \in \gamma_t(p,q)$.
\end{itemize}

For a high integer $h > |\fnht(Q,\pi)|$, we now define the $\omega$-word
$\alpha^t = (\beta_t {\gamma_t}^{h-1})^\omega$, which consists of infinitely many sequences of length $h$ that start with a letter $\beta_t$ and continue with $h-1$ repetitions of the letter $\gamma_t$, for each full FNHT $t \in \fnht(Q,\pi)$.

We first observe that $\alpha^t$ is rejected by $\mathcal P_n^\Pi$.

\begin{lemma}
\label{lem:language}
$\alpha^t \notin \mathcal L (\mathcal P_n^\Pi)$. 
\end{lemma}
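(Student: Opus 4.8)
The plan is to show that every run of $\mathcal P_n^\Pi$ on $\alpha^t$ has an \emph{odd} dominating priority, by exploiting the structure of the FNHT $t$ together with the fact that the letters $\beta_t,\gamma_t$ essentially force any run to ``drift'' through the tree in a way that visits odd levels unboundedly often. More precisely, I would fix an arbitrary run $\rho' = q_0 q_1 q_2 \ldots$ of $\mathcal P_n^\Pi$ on $\alpha^t$ and associate with each index $j$ the longest node $v_j \in \T$ with $q_j \in l_s(v_j)$ (such a node exists since $t$ is full, so $l_s(\varepsilon) = Q \ni q_j$). The key point is that the priorities appearing on the transitions of $\alpha^t$ are, by construction of $\beta_t$ and $\gamma_t$, always $\preccurlyeq$ some value determined by the node currently hosting the run, and strict movement deeper or to an older natural sibling is accompanied by an odd priority at that level. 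This is the transition-level mirror image of the ``Second case'' analysis in the correctness proof of Lemma on accepting runs of $\mathcal C$.

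The steps, in order, would be: (i) Prove by induction that for any two consecutive indices $j, j{+}1$ with $q_j \in l_s(v)$ for a node $v$, the priority $\pri(q_j,\alpha_j,q_{j+1})$ satisfies a ``budget'' bound: if the run stays within the subtree rooted at $v$ it never exceeds $l_l(v)$, and whenever it moves from a natural child's pure part into its recurrent part, or between stepchild-children toward an older sibling, or out of a stepchild's children entirely, the transition carries the corresponding odd priority $l_l(v){-}1$ (for natural children) or $l_l(v)$ (for stepchildren) — these are exactly the clauses in the definition of $\beta_t$ and $\gamma_t$. (ii) Argue that the sequence $v_j$ cannot stabilise at a node $v$ with even $l_l(v)$ and have only finitely many priority-$(l_l(v){-}1)$ (resp.\ $l_l(v)$) transitions afterward: since $\gamma_t$ (and $\beta_t$) place the odd ``death'' priority on \emph{every} transition internal to $l_r(v)$ or $l_p(v)$ of a natural child, and on every transition internal to $l_r(v)$ or among the children of a stepchild, staying put forces infinitely many odd transitions at or above that level. (iii) If instead $v_j$ does not stabilise, take the longest $v$ that is an initial segment of almost all $v_j$; then infinitely often the run either crosses to an older natural sibling, falls out of a stepchild's children, or enters a stepchild $v\s$ from outside — and each such move carries priority $\preccurlyeq l_l(v){+}1$ with the relevant jump being odd, while no transition ever exceeds $l_l(v)$ in the even direction (else it would be forced into $l_r$ of an ancestor, contradicting maximality of $v$ along the $v_j$). (iv) Conclude in both cases that $\limsup_j \pri(q_j,\alpha_j,q_{j+1})$ is odd, hence $\rho'$ rejects; as $\rho'$ was arbitrary, $\alpha^t \notin \mathcal L(\mathcal P_n^\Pi)$.

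The main obstacle I anticipate is step (iii): carefully matching each of the three ``escape'' events for the boundary node $v$ to a clause of $\beta_t/\gamma_t$ that guarantees an odd priority of value $l_l(v){-}1$ or $l_l(v)$, \emph{and simultaneously} ruling out that any transition taken infinitely often has an even priority strictly above $l_l(v)$. The latter needs the observation that an even priority $e' > l_l(v)$ on a transition $q_j \to q_{j+1}$ would, via the $\beta_t/\gamma_t$ definition, require both $q_j$ and $q_{j+1}$ to lie in $l_s$ of some common natural-child ancestor $v'$ at level $e'$; but then the structural constraint $l_s(v') = l_p(v') \cup l_r(v')$ plus the definition forces $q_{j+1} \in l_r(v')$, so after finitely many steps the run is trapped inside $l_r(v')$ — contradicting that $v$ (of strictly smaller level) is a prefix of almost all $v_j$. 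The rest — the budget induction in (i) and the round-robin-free ``trapped forever'' arguments in (ii) — is routine bookkeeping over the FNHT constraints, entirely parallel to the Second-case reasoning already carried out in the correctness lemmas, only now read off the fixed letters $\beta_t,\gamma_t$ rather than from the transition rules of $\mathcal C$.
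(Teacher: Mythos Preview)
Your approach is essentially correct, but it is a genuinely different and substantially longer route than the paper's. The paper proves this lemma in two lines: since Corollary~\ref{cor:correct} has already established that $\mathcal C$ recognises the complement of $\mathcal L(\mathcal P_n^\Pi)$, it suffices to exhibit an accepting run of $\mathcal C$ on $\alpha^t$. The letters $\beta_t$ and $\gamma_t$ are designed precisely so that, after the transfer transition from $Q$, the FNHT component of the $\mathcal C$-state is invariantly $t$ itself; the $\beta_t$ letter, occurring once every $h$ steps, then forces the breakpoint $Q_m'$ to empty (for whichever marker is current), yielding an accepting transition at each multiple of $h$ and cycling through all markers. That is the entire proof.

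What you do instead is re-derive, for this specific word, the content of Lemma~3.1 (``$\mathcal C$ accepts $\Rightarrow$ $\mathcal P$ rejects'') directly on $\mathcal P_n^\Pi$, tracking the host node $v_j$ of each $q_j$ and splitting into stabilising and non-stabilising cases. This is sound, and your identification of the critical step (iii) is accurate. One point you gloss over and should make explicit: in the stabilising case the odd ``death'' priority that dominates comes specifically from the infinitely many $\beta_t$ letters, while the $\gamma_t$ letters insert \emph{even} priorities at a strictly lower numerical value (e.g.\ $l_l(v)-2$ for a natural child, $l_l(v)$ for a stepchild); the $\limsup$ is then the odd value from $\beta_t$, which is numerically larger. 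Without separating the roles of $\beta_t$ and $\gamma_t$ here, step (ii) does not close. Apart from that, your sketch is workable; it simply reproves machinery the paper has already established, whereas the paper's argument cashes in on that machinery for free.
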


\begin{proof}
By Lemma~\ref{cor:correct}, it suffices to show that the complement automaton $\mathcal C$ of $\mathcal P_n^\Pi$, as defined in Section \ref{ssec:complement} accepts $\alpha^t$.
The language is constructed such that $\mathcal C$ has a run $\rho = Q (t;v_m^1,Q_m^1) (t;v_m^2,Q_m^2) (t;v_m^3,Q_m^3) \ldots$, such that the transition
$\big((t;v_m^i,Q_m^i),\alpha^t_i,(t;v_m^{i+1},Q_m^{i+1})\big)$ is accepting for $i > 0$ if $i \mod h =0$.
\end{proof}

Let $\mathcal B$ be some automaton with states $S$ that recognises the complement language of $\mathcal P_n^\Pi$.
We now fix an accepting run $\rho_t=s_0 s_1 s_2 \ldots$ for each word $\alpha^t$ and define the set $A_t$ of states in an `accepting cycle' as $A_t = \big\{s \in S \mid \exists i,j,k \in \omega \mbox{ with } 1\leq j < k \leq h \mbox{ such that } s = s_{ih+j} = s_{ih+k}\big\}$ holds, and define the interesting states $I_t = A_t \cap$ $\infi(\rho_t)$.

\begin{lemma}
\label{lem:disjoint}
For $t \neq t'$, $I_t$ and $I_{t'}$ are disjoint ($I_t \cap I_{t'} = \emptyset$).
\end{lemma}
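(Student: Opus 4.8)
The plan is to argue by contradiction: suppose $s \in I_t \cap I_{t'}$ for two distinct full FNHTs $t \neq t'$. Then $s$ lies on an accepting cycle of $\rho_t$ that is visited infinitely often (by definition of $I_t$), and likewise for $\rho_{t'}$. The key idea is to stitch together pieces of the two runs at the common state $s$. Concretely, since $s \in A_t \cap \infi(\rho_t)$, there are arbitrarily late segments of $\rho_t$ that start at $s$, read a factor of $\alpha^t$ of the form $\gamma_t^a \beta_t \gamma_t^{h-1}\cdots$ (a block-aligned infix that contains at least one $\beta_t$, hence at least one accepting transition of $\rho_t$ on it), and return to $s$; the same holds for $\rho_{t'}$ with factors of $\alpha^{t'}$. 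Splicing these alternately produces an accepting run of $\mathcal B$ on a word $w$ that is an infinite shuffle of finite infixes of $\alpha^t$ and of $\alpha^{t'}$, each infix containing a full block $\beta_\bullet \gamma_\bullet^{h-1}$ so that accepting transitions are hit infinitely often. Hence $w \notin \mathcal L(\mathcal P_n^\Pi)$, i.e. $\mathcal P_n^\Pi$ has no accepting run on $w$.

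Next I would derive the contradiction by exhibiting an accepting run of $\mathcal P_n^\Pi$ on such a spliced word $w$. This is where the structure of the letters $\beta_t,\gamma_t$ and full FNHTs enters. Since $t \neq t'$, the trees differ in some node's labelling; one isolates a state $q \in Q$ and a "direction" witnessed by the letters at which $t$ and $t'$ disagree on the priority they assign to some pair $(p,q)$ — and crucially the priorities in the $\beta_t/\gamma_t$ letters are designed so that, staying inside the node structure of $t$, the optimal infinitely-recurring priority along the natural "nested history" run is odd (this is exactly what Lemma~\ref{lem:language} encodes for the pure word $\alpha^t$), whereas by switching between $t$-blocks and $t'$-blocks one can force a run that, on the shuffled word, takes a higher \emph{even} priority infinitely often. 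The design principle is that each full FNHT corresponds to a maximal "rejecting strategy" for $\mathcal P_n^\Pi$, and two distinct such strategies, when their defining words are interleaved along a shared $\mathcal B$-state, leave a gap that $\mathcal P_n^\Pi$ can exploit: there is a run that follows, on the $t$-segments, the states dictated by $t$'s structure and, on the $t'$-segments, those dictated by $t'$'s, and at the seam the differing priority labels let it realise an even $\limsup$. Thus $w \in \mathcal L(\mathcal P_n^\Pi)$, contradicting $w \notin \mathcal L(\mathcal P_n^\Pi)$.

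The main obstacle is the last step: constructing the accepting run of $\mathcal P_n^\Pi$ on the shuffled word $w$ and verifying its dominating priority is even. One must (i) pin down precisely in what respect two distinct full FNHTs over the same $Q$ and $\pi$ must differ — by induction on the tree structure, the first level (root ordering of natural children, the recurrent/pure split, or the stepchild decomposition) at which they diverge; (ii) read off from the $\beta/\gamma$ definitions which priority is forced on the relevant transitions in $\alpha^t$ versus $\alpha^{t'}$; and (iii) show that the seam between a $t$-segment and a $t'$-segment can be taken at a transition carrying an even priority at least as good as every priority seen cofinitely often, so that interleaving infinitely often keeps that even priority as the $\limsup$. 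The choices of the long block length $h > |\fnht(Q,\pi)|$ and of "late enough" segment endpoints are used to ensure the spliced run is well-defined and that, within each $A_t$-cycle, an accepting transition of $\mathcal B$ really does occur, so the resulting run of $\mathcal B$ is genuinely accepting. Once the divergence of $t$ and $t'$ is localised, steps (ii)–(iii) are a finite case analysis over the six clauses defining $\beta_t$ and the three extra clauses for $\gamma_t$.
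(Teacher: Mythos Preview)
Your high-level architecture (assume $s\in I_t\cap I_{t'}$, splice the runs at $s$, show $\mathcal B$ accepts the spliced word, then show $\mathcal P_n^\Pi$ also accepts it) matches the paper's. But the splicing you describe is the wrong one, and this is not a cosmetic difference: it breaks the second half of the argument. You take $t$-segments that ``contain at least one $\beta_t$, hence at least one accepting transition of $\rho_t$''. First, the ``hence'' is unfounded for an arbitrary complement automaton $\mathcal B$; nothing ties accepting transitions of $\mathcal B$ to the letter $\beta_t$ (that correlation exists only for the particular $\mathcal C$ from Section~\ref{ssec:complement}). Second, and more importantly, the whole point of the set $A_t$ is the opposite of what you use it for: $s\in A_t$ gives a cycle $s_{ih+j}\ldots s_{ih+k}$ with $1\le j<k\le h$, i.e.\ one that reads \emph{only} $\gamma_t$'s and no $\beta_t$ at all. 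The paper's spliced word is $\beta_1(\beta_2\beta_3)^\omega$ where $\beta_2=\gamma_t^{i'-i}$ is pure $\gamma_t$, and the accepting transition of $\mathcal B$ is supplied by the $t'$-segment $\beta_3$ (chosen, via $s\in\infi(\rho_{t'})$, so that an accepting transition lies between two visits to $s$). Your symmetric alternation with $\beta_t$'s on the $t$-side cannot be repaired by a cleverer $\mathcal P_n^\Pi$-run: already in the degenerate case $\mathcal T=\{\varepsilon\}$ (root a leaf, $\pi$ odd) one has $\beta_t(p,q)=\{\pi_e+1\}$ for all $p,q$, so \emph{every} run of $\mathcal P_n^\Pi$ on a word containing infinitely many $\beta_t$'s has $\limsup\ge\pi_e+1$, which is the maximal priority and odd.

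Relatedly, the even $\limsup$ in the paper's case analysis does not come from a ``seam transition'' but from the three \emph{extra} clauses that distinguish $\gamma_t$ from $\beta_t$ (they insert even priorities $l_l(v)$ resp.\ $l_l(v)-2$ for $p,q$ both in $l_r(v)$ or both in $l_p(v)$). The accepting runs of $\mathcal P_n^\Pi$ are then very simple (often $q^\omega$, sometimes a two-state periodic run $q^{i'}(q'^{\,j'-j}q^{\,i'-i})^\omega$), and the even priority is realised on the infinitely many $\gamma_t$-positions, while on the $t'$-letters one only needs to bound the priority from above. Your step (iii) should therefore be replaced by: locate the first level at which $t$ and $t'$ differ, choose $q$ (and possibly $q'$) witnessing that difference, and verify that on $\gamma_t$ the run sees an even priority $\ge\lambda$ while on $\beta_{t'},\gamma_{t'}$ it never exceeds that $\lambda$ numerically. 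The paper also notes that the roles of $t$ and $t'$ may need to be swapped depending on which side the witnessing state lies, so one argues for $\alpha_t^{t'}$ or $\alpha_{t'}^t$.
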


\begin{proofidea}
 The proof idea is to assume that a state $s \in I_t \cap I_{t'}$, and use it to construct a word from $\alpha_t$ and $\alpha_{t'}$ and an accepting run of $\mathcal B$ on the resulting word from $\rho_t$ and $\rho_{t'}$, and then show that it is also accepted by $\mathcal P_n^\Pi$.
\end{proofidea}

\begin{proof}
Let us assume for contradiction that $s \in I_t \cap I_{t'}$ for 
$t = (\T,l_s,l_l,l_p,l_r) \neq t' = (\T',l_s',l_l',l_p',l_r')$.

Noting that we can change the role of $t$ and $t'$, we fix two positions $i$ and $i'$ in the run $\rho_t$ of $\alpha_t$ such that
$s = s_i = s_{i'}$, and there is a $j \in \omega$ such that $jh < i < i' \leq j(h+1)$, and two positions $j$ and $j'$ in $\rho_{t'} = s_0's_1's_2' \ldots$ such that $j<j'$, $s = s_j' = s_{j'}'$ and there is a $k\in \omega$ with $j \leq k < j'$ such that $(s_k',\alpha_k^{t'},s_{k+1}')$ is an accepting transition of $\mathcal B$.
Note that the definition of $I_t$ provides the first and the definition of $I_{t'}$ the latter.

For the finite words $\beta_1 = \alpha_0^t \alpha_1^t \ldots \alpha_{i-1}^t$, $\gamma_1=s_0 s_1 \ldots s_{i-1}$, $\beta_2 = {\gamma_t}^{i'-1}$, $\gamma_2=s_i s_{i+1} \ldots s_{i'-1}$, $\beta_3 = \alpha_j^{t'} \alpha_{j+1}^{t'} \ldots \alpha_{j'-1}^{t'}$, and $\gamma_3=s_j s_{j+1} \ldots s_{j'-1}$,
$\rho^{t'}_t = \gamma_1 (\gamma_2 \gamma_3)^\omega$ is an accepting run of the input word $\alpha^{t'}_t = \beta_1 (\beta_2 \beta_3)^\omega = \alpha_0 \alpha_1 \alpha_2 \ldots$.

We now show that $\alpha^{t'}_t$ or $\alpha^t_{t'}$ is accepted by $\mathcal P_n^\Pi$.

We start with the degenerated case that $\T = \{\varepsilon\}$ is the FNHT where the root is a leaf, and thus $\pi = \max \Pi$ odd.
(The case $\T' = \{\varepsilon\}$ is similar.)
We select a $q \in l_s'(0)$, and consider the run
$\rho=q^\omega$ of $\mathcal P_n^\Pi$ on $\alpha^{t'}_t$.
By construction, $\pri(q,\alpha_k,q) \leq \opt \Pi = l_l(\varepsilon)$ holds for all $k \geq i$.
Moreover, $\alpha_k = \gamma_t$ holds for infinitely many $k \in \omega$.
(In particular, it holds if $k\geq i$ and $(k - i) \mod (i'-i + j' - j) < i' - i$.)
For all of these transitions, $\pri(q,\alpha_k,q) = \opt \Pi= l_l(\varepsilon)$ holds, such that $\limsup_{n \rightarrow \infty}\big(\overline{\rho}(i)\big) = \opt \Pi$ is even.

Starting with the level $\lambda = \opt \Pi$ of the root and the whole trees $\T$ and $\T'$, we now run through the following construction.

We \textbf{firstly} look at the \textbf{case} that there is \textbf{some difference in the label of some natural child} $v \in \T \cap \T'$ on the level $\lambda$.
If there is an oldest child $v \in \T \cap \T'$ with $l_s(v) \neq l_s'(v)$, we assume w.l.o.g.\ that there is a $q \in l_s(v) \smallsetminus l_s'(v)$.
Then there are two sub-cases, first that there is a $q' \in l_s(v) \cap l_s'(v)$, and second that $l_s(v) \cap l_s'(v)= \emptyset$.
In the latter case we choose a $q' \in l_s'(v)$.
In both sub-cases, the run $\rho = q^{i'} (q'^{j'-j} q^{i'-i})^\omega = q_0 q_1 q_2 \ldots$ of $\mathcal P_n^\Pi$ on $\alpha^{t'}_t$ satisfies $\pri(q_k,\alpha_k,q_{k+1}) \succcurlyeq \lambda-1$ for all $k \in \omega$, and
$\pri(q_k,\alpha_k,q_{k+1}) \succcurlyeq \lambda$ when $q_k = q$ and $q_{k+1} = q'$. (Note that in this case $\alpha_k \in \{\beta_{t'},\gamma_{t'}\}$ holds.)

Otherwise $l_s(v) = l_s'(v)$ holds for all natural children $v \in \T \cap \T'$ on level $\lambda$, and there is a $v \in \T \cap \T'$ on level $\lambda$ such that $l_r(v) \neq l_r'(v)$.
We assume w.l.o.g.\ that there is a $q \in l_r(v)\smallsetminus l_r'(v)$.
We choose a $q' \in l_p(v)$. (Note that $q \neq q' \in l_s(v) = l_s'(v)$.)
Then the run $\rho = q^{i'} (q'^{j'-j} q^{i'-i})^\omega = q_0 q_1 q_2 \ldots$ of $\mathcal P_n^\Pi$ on $\alpha^{t'}_t$ satisfies $\pri(q_k,\alpha_k,q_{k+1}) \succcurlyeq \lambda-1$ for all $k \in \omega$, and
$\pri(q_k,\alpha_k,q_{k+1}) \succcurlyeq \lambda$ when $q_k = q$ and $q_{k+1} = q'$. (Note that in this case $\alpha_k = \gamma_t$ holds.)

We \textbf{secondly} look at the \textbf{case} where $l_s(v) = l_s'(v)$ and $l_r(v) = l_r'(v)$ holds for all natural children $v \in \T \cap \T$ on level $\pi_e$, but there is a natural \textbf{child} $v$ on level $\lambda$ \textbf{in the symmetrical difference} of $\T$ and $\T'$.
Let us assume w.l.o.g.\ that $v \in \T'$.
Let $q \in l_s'(v)$ and let $v$ be the child of $v'$.
This immediately implies that $q \in l_r(v)$.
Thus, the run $\rho = q^\omega$ of $\mathcal P_n^\Pi$ on $\alpha^{t'}_t$ satisfies $\pri(q,\alpha_k,q) \succcurlyeq \lambda - 1$ for all $k > i$, and $\pri(q,\alpha_k,q) \succcurlyeq \lambda$ whenever $\alpha_k = \gamma_t$, which happens infinitely often.

We \textbf{finally} look at the \textbf{case} where the nodes of $\T$ and $\T'$ on level $\lambda$ are the same, and where $l_s(v) = l_s'(v)$ and $l_r(v) = l_r'(v)$ hold for all nodes $v$ of $\T$ on level $\lambda$, but there is a node $v$ on level $\lambda$ which is a \textbf{leaf} in $\T$ but not in $\T'$. (The case ``leaf in $\T'$ but not in $\T$'' is entirely symmetric.)
Thus, $v\s 0$ is a node in $\T'$, and we select a $q \in l_s(v\s 0)$.
If we now consider the run $\rho = q^\omega$ of $\mathcal P_n^\Pi$ on $\alpha^{t'}_t$, then $\pri(q,\alpha_k,q) \succcurlyeq \lambda - 3$ holds for all $k > i$.
At the same time $\pri(q,\alpha_k,q) \succcurlyeq \lambda-2$ holds whenever $\alpha_k = \gamma_t$, which happens infinitely often.

If neither of these cases holds, then there must be a natural child $v$ on level $\lambda$ such that $v\s \in \T \cap \T'$ and $l_s(v\s) = l_p(v) = l_p'(v) = l_s'(v\s)$, such that $t$ and $t'$ differ on the descendants of $v$.
We then continue the construction by reducing $\lambda$ to $\lambda - 2$ and intersecting $\T$ and $\T'$ with the descendants of $v$ in $t$ and $t'$, respectively, and restrict the co-domain of the labelling functions of $t$ and $t'$ accordingly.
This construction will lead to a difference in at most $0.5 \cdot \opt \Pi$ steps.
\end{proof}

\begin{theorem}
\label{theo:lower}
$\mathcal B$ has at least as many states as $\fnht(Q,\max \Pi)$ contains full FNHTs.
\end{theorem}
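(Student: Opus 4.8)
The plan is to turn the disjointness statement of Lemma~\ref{lem:disjoint} into a cardinality bound by a simple counting argument. Concretely, I will check that each set $I_t$ is non-empty, invoke Lemma~\ref{lem:disjoint} to conclude that the family $\{I_t\}_t$, indexed by the full FNHTs $t \in \fnht(Q,\max\Pi)$, consists of pairwise disjoint subsets of $S$, and then read off $|S| \ge \sum_t |I_t| \ge |\{t \in \fnht(Q,\max\Pi) : t \text{ full}\}|$.

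For the non-emptiness of $I_t$, first recall that Lemma~\ref{lem:language} gives $\alpha^t \notin \mathcal L(\mathcal P_n^\Pi)$, so $\mathcal B$, recognising the complement language, accepts $\alpha^t$, and the fixed run $\rho_t = s_0 s_1 s_2 \ldots$ is an infinite accepting run on $\alpha^t$. We may assume $|S| < h$, since otherwise $|S| \ge h > |\fnht(Q,\max\Pi)|$ already exceeds the number of full FNHTs and there is nothing to prove. Then, for every block index $i$, the $h$ states $s_{ih+1}, \ldots, s_{ih+h}$ cannot be pairwise distinct, so there are $1 \le j < k \le h$ with $s_{ih+j} = s_{ih+k}$; this state lies in $A_t$. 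Since there are infinitely many blocks and only finitely many states, some single state $s^\ast$ occurs as such a within-block repeat for infinitely many block indices $i$. Then $s^\ast \in A_t$ by definition, and since $s^\ast$ is visited at least twice in infinitely many blocks, $s^\ast \in \infi(\rho_t)$; hence $s^\ast \in A_t \cap \infi(\rho_t) = I_t$, so $I_t \neq \emptyset$.

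Now Lemma~\ref{lem:disjoint} states that $I_t \cap I_{t'} = \emptyset$ whenever $t \neq t'$ are full FNHTs in $\fnht(Q,\max\Pi)$. Since the $I_t$ are non-empty pairwise disjoint subsets of $S$, summing cardinalities yields $|S| \ge \sum_{t} |I_t| \ge |\{t \in \fnht(Q,\max\Pi) : t \text{ full}\}|$, which is precisely the claimed bound.

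The genuine difficulty of this lower bound has already been absorbed into Lemma~\ref{lem:disjoint}: the surgery that glues the two witness words $\alpha^{t'}_t$ and $\alpha^t_{t'}$ together with their runs and then traces a run with even dominating priority inside $\mathcal P_n^\Pi$. Relative to that, the present proof only needs the harmless case split on $|S|$ versus $h$, a pigeonhole argument for $I_t \neq \emptyset$, and the bookkeeping above; no further obstacle is expected.
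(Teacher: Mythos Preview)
Your proof is correct and follows essentially the same approach as the paper's. The paper organises the same dichotomy slightly differently---it splits on whether some $I_t$ is empty (and then shows $|S|\geq h-1$ by looking at a single block entirely contained in $\infi(\rho_t)$) rather than on $|S|<h$ versus $|S|\geq h$---but the two case splits are contrapositive to one another, and the underlying pigeonhole/disjointness counting is identical.
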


\begin{proof}
We prove the claim with a case distinction.
The first case is that $I_t \neq \emptyset$ holds for all full FNHT $t \in \fnht(Q,\max \Pi)$.
Lemma \ref{lem:disjoint} shows that the sets of interesting states are pairwise disjoint for different trees $t \neq t'$, such that, as none of them is empty, $\mathcal B$ has at least as many states as $\fnht(Q,\max \Pi)$ contains full FNHTs.

The second case is there is a full FNHT $t \in \fnht(Q,\max \Pi)$ such that $I_t = \emptyset$.
By Lemma \ref{lem:language}, each $\rho_t=s_0 s_1 s_2 \ldots$ is an accepting run.
Let now $i \in \omega$ be an index, such that, for all $j \geq i$, $s_j \in \infi(\rho_t)$, and $k \geq i$ an integer with $k \mod h = 0$.
$I_t = \emptyset$ implies that $s_{k+j} \neq s_{k+j'}$ for all $j,j'$ with $1 \leq j<j' \leq h$.
Then $\mathcal B$, and even $\infi(\rho_t)$, has at least $h-1$ different states, and the claim follows with $h > |\fnht(Q,\max \Pi)|$.
\end{proof}

\label{ssec:tight}
To show tightness, we proceed in three steps.
In a first step, we provide an injection from MFTs with non-full marking to MFTs with full marking.

Next, we argue that the majority of FNHTs is full.
Taking into account that there are at most $|Q|$ different markers makes it simple to infer that the states of our complementation construction divided by the lower bound from Theorem \ref{theo:lower} is in $O(n)$.

\begin{lemma}
\label{lem:emptyMFT}
There is an injection from MFTs with non-full marking to MFTs with full marking in $\mft(Q,\pi)$.
\end{lemma}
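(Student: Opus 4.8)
The plan is to build an injection by "completing" the marking. Given an MFT $(\T,l_s,l_l,l_p,l_r;v_m,Q_m)$ with non-full marking, the natural candidate is to keep the underlying FNHT and the marker $v_m$ unchanged, and simply replace $Q_m$ by the full marking set: if $v_m=(\overline v,r)$, set $Q_m' = l_r(\overline v)$; if $v_m = (\overline v,p)$, set $Q_m' = l_p(\overline v)$. The constraints for markers are clearly still satisfied (the relevant set is nonempty since $\emptyset \neq Q_m \subseteq l_r(\overline v)$ resp.\ $l_p(\overline v)$), so this does land in $\mft(Q,\pi)$ with a full marking. The issue is that this is not injective on its own: many different $Q_m$ collapse to the same full $Q_m'$. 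So the real task is to encode the discarded information $Q_m$ somewhere, by perturbing the FNHT in a reversible way.

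**Key steps.** First I would make precise how much "room" there is. The non-full set $Q_m$ is an arbitrary nonempty proper subset of $l_r(\overline v)$ (or $l_p(\overline v)$); write $m = |l_r(\overline v)|$ (resp.\ $|l_p(\overline v)|$). The number of MFTs-with-non-full-marking sharing a given FNHT and marker position is $2^{m}-2$ (all subsets except $\emptyset$ and the full one). So I need to inject this finite fiber into a collection of full-marked MFTs whose underlying FNHTs are mild modifications of the given one. The device I would use is to split off the marked states into a fresh natural child (or, for the $p$-case, exploit the leaf/non-leaf distinction): attach to $\overline v$ an extra natural child whose $l_s$-label records $Q_m$, or more carefully reorganize the natural children of the relevant stepchild so that one of them carries exactly $Q_m$ as a recurrent/pure block. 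Because the FNHT constraints force $l_s$ of siblings to be disjoint and their union (plus $l_r$ of the parent stepchild) to equal $l_s$ of the parent, any such reorganization is rigid, hence reversible: from the image MFT one reads off which child was the "new" one and recovers $Q_m$, then merges it back and restores the original FNHT. I would carry this out in two cases, $v_m=(\overline v,r)$ and $v_m=(\overline v,p)$, the latter using that $\overline v$ is a leaf, so that "un-leafing" $\overline v$ by giving it a stepchild with $l_s = l_p(\overline v)$ and then a natural child carrying $Q_m$ is available and reversible via the level bookkeeping $l_l(v\s)=l_l(v)-2$. Finally I would check the two halves of the injection (well-definedness into full-marked MFTs, and that distinct inputs—whether differing in the base FNHT, the marker, or $Q_m$—map to distinct outputs) by exhibiting the inverse construction.

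**Main obstacle.** The delicate point is staying inside the FNHT constraints while making the modification reversible. Splitting a recurrent block $Q_m \subsetneq l_r(\overline v)$ off $\overline v$ into a new natural child is fine only if $\overline v$ is a stepchild, and one must then decide the ordering position of the new child among $\nc(\overline v)$ and whether its states are "pure" or "recurrent"; the level constraint $l_l(vc)=l_l(v)$ and the disjointness/partition constraints must all be reconciled, and one must ensure the newly created node is unambiguously identifiable in the image so the inverse can locate it (e.g.\ by always inserting it as the oldest child, and tagging its status consistently). For the $p$-marker case, the analogous move changes a leaf into a non-leaf, which is reversible precisely because the new stepchild's $l_s$ is forced to equal $l_p(\overline v)$, but one has to confirm no collision with FNHTs that were already non-leaves at $\overline v$. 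I expect the bulk of the work, and the only place real care is needed, to be this bookkeeping: checking that the map is total into the target set and that the inverse is well-defined; the combinatorics ($2^m-2$ sources vs.\ the available modified full-marked MFTs) is comfortably in our favor, so no counting subtlety arises—only the structural consistency of the encoding.
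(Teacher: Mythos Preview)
Your core idea --- keep the marker, replace $Q_m$ by the full set, and stash the original $Q_m$ as the $l_s$-label of a freshly inserted node so that the map becomes reversible --- is exactly what the paper does. The paper's realisation differs only in \emph{where} the fresh node is placed: rather than descending below $\overline v$, it inserts a new \emph{sibling} of $\overline v$ (a fresh youngest natural child of the parent of $\overline v$) with $l_s'=l_p'=Q_m$ and $l_r'=\emptyset$, removes $Q_m$ from the appropriate component of $\overline v$'s own label, and takes as the new full marking the shrunk but still non-empty set $l_r'(\overline v)$ or $l_p'(\overline v)$ (non-emptiness being precisely the hypothesis that the original marking was non-full). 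The tree $\T=\{\varepsilon\}$ is treated as a separate corner case.

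Your descent-based variant has a gap you did not flag. In the $p$-case you un-leaf $\overline v$ by creating $\overline v\s$, but the FNHT invariant $l_l(w)\geq 2$ together with $l_l(\overline v\s)=l_l(\overline v)-2$ makes this step undefined whenever $l_l(\overline v)=2$; and leaves at level~$2$ are perfectly legitimate (indeed forced once the level has bottomed out). The sibling insertion sidesteps this entirely because the new node sits on the same level as $\overline v$. You also leave the $r$-case for a natural-child $\overline v$ only sketched (``reorganize the natural children of the relevant stepchild''), whereas the sibling move handles that case directly and uniformly. Once you switch to siblings, the inverse you were worried about is immediate: the marker still points at $\overline v$, the fresh node is its distinguished extremal sibling, and that node's label is the original $Q_m$ --- so the collision concern you raise dissolves as well.
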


\begin{proof}
For non-trivial trees $\T \neq \{\emptyset\}$, we can simply map an MFT $(\T,l_s,l_l,l_p,l_r;v_m,Q_m)$
\begin{itemize}
 \item for $v_m = (\overline{v},p)$ to the MFT $\big(\T',l_s',l_l',l_p',l_r';v_m,l_p'(\overline{v})\big)$ and
 \item for $v_m = (\overline{v},r)$ to the MFT $\big(\T',l_s',l_l',l_p',l_r';v_m,l_r'(\overline{v})\big)$, where
\end{itemize}
$\T'$ differs from $\T$ only in that it has a fresh node $v$, which is the youngest sibling of $v_m$.
$l_s',l_p',l_r'$ differ from $l_s,l_p,l_r$ only in $\overline{v}$ and $v$ (where $v$ is only in the pre-image of $l_s',l_l',l_p',l_r'$).
We set $l_s'(v) = l_p'(v) = Q_m$ and, consequently, $l_r'(v) = \emptyset$.
We also set $l_s'(v) = l_s(v) \smallsetminus Q_m$.

For $v_m = (\overline{v},p)$, we set
$l_r'(\overline{v}) = l_r(\overline{v})$ and $l_p'(\overline{v}) = l_p(\overline{v}) \smallsetminus Q_m$.
Note that, by the definition of markers, $\overline{v}$ is a leaf, and $l_p'(\overline{v})$ is non-empty because the marking in $(\T,l_s,l_l,l_p,l_r;v_m,Q_m)$ is not full.

For $v_m = (\overline{v},r)$, we set
$l_r'(\overline{v}) = l_r(\overline{v}) \smallsetminus Q_m$ and $l_p'(\overline{v}) = l_p(\overline{v})$.
Note that $l_r'(\overline{v})$ is non-empty because the marking in $(\T,l_s,l_l,l_p,l_r;v_m,Q_m)$ is not full.

It is easy to see that the resulting MFT is well formed in both cases.
What remains is the corner case of $\T = \{\varepsilon\}$.

$(\T,l_s,l_l,l_p,l_r;(\varepsilon,r),Q_m)$ and map it to $(\T',l_s',l_l',l_p',l_r';(\varepsilon,r),Q_m)$ for $\T' = \{\varepsilon,0\}$ and
$l_s'(\varepsilon)=l_s(\varepsilon)$, $l_s'(\varepsilon)=Q_m$, $l_p'(\varepsilon)=l_p'(0)=\emptyset$, and consequently $l_s'(0)= l_r'(0)= l_s(\varepsilon) \smallsetminus Q_m$.
(Note that the latter is non-empty because the marking in $(\T,l_s,l_l,l_p,l_r;(\varepsilon,r),Q_m)$ is not full.)
This is again a well formed MFT with full marking.

It is easy to see that the resulting function is injective.
\end{proof}

In Lemma \ref{lem:emptyMFT}, we have shown that the majority of MFTs have a full marking.
Next we will see that the majority of FNHTs is full. (Note that neither mapping is surjective.)

\begin{lemma}
\label{lem:fullFNHT}
There is an injection from non-full to full FNHTs in $\fnht(Q,\pi)$.
\end{lemma}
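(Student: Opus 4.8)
The plan is to reuse the idea from the proof of Lemma \ref{lem:emptyMFT}: given a non-full FNHT, the root satisfies $l_s(\varepsilon) \subsetneq Q$, so $Q \smallsetminus l_s(\varepsilon) \neq \emptyset$, and we have a non-empty set of ``missing'' states that we can attach to the tree in a canonical place to make the root's state set equal to $Q$, while recording enough information to invert the operation. Concretely, first I would treat the main case $\pi_e \geq 4$ (so the root has a stepchild level below it), or more uniformly the case where the root has at least one natural child. In that situation I would add a fresh node $v$ as the \emph{youngest} natural child of the root (i.e.\ $v = \varepsilon c$ where $c = |\nc(\varepsilon)|$), set $l_s'(v) = l_r'(v) = Q \smallsetminus l_s(\varepsilon)$ and $l_p'(v) = \emptyset$... wait — natural children must have $l_p(v) \neq \emptyset$. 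So instead I would set $l_p'(v) = Q \smallsetminus l_s(\varepsilon)$, $l_r'(v) = \emptyset$, and since a natural child with $l_l(v) \geq 4$ would need a stepchild with $l_s(v\s) = l_p'(v)$, I would make $v$ a leaf whenever the constraints permit (natural children may be leafs), or attach the minimal forced subtree below it. Then $l_s'(\varepsilon) = l_s(\varepsilon) \cup (Q\smallsetminus l_s(\varepsilon)) = Q$, and disjointness with the other children's state sets holds because the added set is exactly the complement. The remaining labels are unchanged.

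The inversion is straightforward: given a full FNHT in the image, its youngest natural child $v$ of the root has $l_r'(v) = \emptyset$ and is a leaf (or carries only the forced minimal subtree) and has $l_p'(v) = l_s'(v)$; we recover the original by deleting $v$ and setting $l_s(\varepsilon) = l_s'(\varepsilon) \smallsetminus l_s'(v)$. To make this a genuine injection I must ensure the image avoids collisions with full FNHTs that already ``look like'' the output, i.e.\ the construction must be recognisable. The natural way to guarantee this is to observe that a full FNHT produced by the map always has a youngest root-child that is pure ($l_r' = \emptyset$) and has no recurrent descendants of the relevant shape; by choosing the canonical minimal completion below $v$ we can read off from any full FNHT whether it is in the image and, if so, peel off exactly $v$. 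I would spell out the two sub-cases analogously to Lemma \ref{lem:emptyMFT}: the corner case $\T = \{\varepsilon\}$ (root a leaf, $\pi$ odd) where I instead add a natural child $0$ with $\T' = \{\varepsilon, 0\}$, $l_s'(0) = l_p'(0) = Q \smallsetminus l_s(\varepsilon)$, $l_r'(0) = \emptyset$, making $\varepsilon$ no longer a leaf; and the main case handled above.

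The main obstacle I anticipate is the interaction with the FNHT \textbf{constraints} on natural children: a non-leaf natural child $v$ is forced to have a stepchild $v\s$ with $l_s(v\s) = l_p(v) = Q\smallsetminus l_s(\varepsilon)$, and that stepchild in turn must have $l_s(v\s) = l_r(v\s) \cup \bigcup_{\nc} l_s$, which I can satisfy by setting $l_r(v\s) = Q\smallsetminus l_s(\varepsilon)$ and giving $v\s$ no natural children — but only if $v\s$ is allowed to be a leaf, and stepchildren are allowed to be leafs only when they are natural children's... no: ``only natural children and, when $\pi$ is odd, the root may be leafs'' — so a stepchild that is not the root may never be a leaf. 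This forces an infinite regress unless $v$ itself is a leaf, which is permitted. Hence the clean choice is: \emph{always make the fresh natural child $v$ a leaf}. This is legal (natural children may be leafs regardless of parity), and it sidesteps the subtree-completion issue entirely, at the cost of a small check that no level constraint is violated — but $l_l(v) = l_l(\varepsilon) = \pi_e \geq 2$ holds automatically. With $v$ a leaf, both the map and its inverse are transparent, and injectivity is immediate since different non-full FNHTs differ either in $l_s(\varepsilon)$ (recovered as $l_s'(\varepsilon)\smallsetminus l_s'(v)$) or elsewhere in the tree (untouched by the map).
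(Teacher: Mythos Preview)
Your proposal is correct and lands on precisely the construction the paper uses: add a fresh youngest natural child $v$ of the root with $l_s'(v)=l_p'(v)=Q\smallsetminus l_s(\varepsilon)$, $l_r'(v)=\emptyset$, make $v$ a leaf, and set $l_s'(\varepsilon)=Q$; the paper then simply declares injectivity ``obvious'' where you spell out the inverse. Your separate corner case $\T=\{\varepsilon\}$ is not actually needed---when the root has no natural children, the ``youngest child'' you add is $0$, which is exactly your corner-case construction---so you can fold it into the main argument and drop the exploratory detours about forced subtrees.
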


\begin{proof}
To obtain such an injection, it suffices to map a non-full FNHT $(\T,l_s',l_l',l_p',l_r')$ to the FNHT $(\T',l_s',l_l',l_p',l_r')$ where $\T'$ differs from $\T$ only in that it has a fresh youngest child $v$ of the root.

$l_s'$ agrees with $l_s$ on every node of $\T$ except for the root $\varepsilon$, and $l_p',l_r'$ agree with $l_p,l_r$ on every node of $\T$.
We set
$l_s' (\varepsilon) = Q$, $l_s' (v) = l_p'(v)=Q \smallsetminus l_s(\varepsilon)$, and $l_r'(v)= \emptyset$.

It is obvious that the resulting FNHT $(\T',l_s',l_l',l_p',l_r')$ is full and well formed, and it is also obvious that the mapping is injective.
\end{proof}

\begin{theorem}
The complementation construction from this section is tight up to a factor of $4n+1$, where $n=|Q|$ is the number of states of the complemented parity automaton.
\end{theorem}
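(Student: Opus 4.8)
The plan is to compare the size $|Q'| = |Q_1| + |Q_2| = 2^{n} + |\mft(Q,\pi)|$ of $\mathcal C$ with the lower bound of Theorem~\ref{theo:lower}, which says that any nondeterministic \buchi\ automaton recognising the complement of $\mathcal L(\mathcal P_n^\Pi)$ has at least $F$ states, where $F$ is the number of full FNHTs in $\fnht(Q,\max\Pi)$. Thus it suffices to establish the two estimates $2^{n}\le F$ and $|\mft(Q,\pi)|\le 4nF$, which together give $|Q'|\le (4n+1)F$.

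For $2^{n}\le F$ I would exhibit $2^{n}$ pairwise distinct full FNHTs: for every non-empty $S\subseteq Q$, the FNHT with $\T=\{\varepsilon,0\}$, $l_s(0)=l_p(0)=S$, $l_r(0)=\emptyset$, $l_r(\varepsilon)=Q\smallsetminus S$ is full, giving $2^{n}-1$ of them, and one more is obtained as the single-node FNHT $\T=\{\varepsilon\}$ when $\max\Pi$ is odd, or, when $n\ge 2$, as an FNHT whose root has two natural-child leaves. The sole remaining case, $n=1$ and $\pi=2$ (where $F=1$), is checked directly: then $|Q'|=2+1=3\le 5=4n+1$.

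The estimate $|\mft(Q,\pi)|\le 4nF$ is the heart of the argument. By Lemma~\ref{lem:emptyMFT} the MFTs with non-full marking inject into those with full marking, so $|\mft(Q,\pi)|$ is at most twice the number of full-marking MFTs; and a full-marking MFT is exactly a pair of an FNHT $t=(\T,l_s,l_l,l_p,l_r)$ and one of its markers, since for a full marking the marking set is forced to be $l_r(\overline v)$ or $l_p(\overline v)$. The step I expect to be the main obstacle is showing that a single FNHT $t$ has at most $n$ markers. I would prove this by verifying that the sets $l_r(v)$, taken over the nodes $v\in\T$ with $l_r(v)\neq\emptyset$, together with the sets $l_p(v)$, taken over the natural-child leaves $v$ of $\T$, form a family of pairwise disjoint non-empty subsets of $Q$ — so there are at most $|Q|=n$ of them, and these index precisely the markers of $t$. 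Disjointness of any two of these sets is a short case analysis on whether their nodes are equal, incomparable, or one an ancestor of the other, using the FNHT constraints that $l_s$ is non-increasing along every branch, that the $l_s$-labels of distinct natural children of a stepchild are pairwise disjoint and disjoint from its $l_r$-label, that $l_s(v\s)=l_p(v)$ for a non-leaf natural child $v$, and that $l_p(v)\cap l_r(v)=\emptyset$ for a natural child $v$. Hence the number of full-marking MFTs is at most $n\cdot|\fnht(Q,\pi)|$, and Lemma~\ref{lem:fullFNHT} (an injection from non-full to full FNHTs, whence $|\fnht(Q,\pi)|\le 2F$) turns this into $|\mft(Q,\pi)|\le 2\cdot n\cdot 2F = 4nF$.

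Combining the two estimates gives $|Q'| = 2^{n} + |\mft(Q,\pi)| \le F + 4nF = (4n+1)F$, which by Theorem~\ref{theo:lower} is at most $4n+1$ times the number of states of any \buchi\ automaton for the complement of $\mathcal L(\mathcal P_n^\Pi)$. Beyond the two injection lemmas, the only real content is the disjointness of the marker-indexing sets; the rest is bookkeeping.
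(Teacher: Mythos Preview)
Your proposal is correct and follows essentially the same route as the paper: bound $|\mft(Q,\pi)|$ by $4nF$ via Lemma~\ref{lem:emptyMFT}, the observation that each FNHT carries at most $n$ markers (because the relevant $l_r$- and $l_p$-sets are pairwise disjoint non-empty subsets of $Q$), and Lemma~\ref{lem:fullFNHT}; then add the contribution of $Q_1=2^Q$ to obtain the factor $4n+1$. The only difference is that where the paper dismisses the $2^n$ term with the one-line remark that the subset construction is ``dwarfed'' by the number of MFTs, you actually prove $2^n\le F$ by exhibiting $2^n$ distinct full FNHTs and handling the boundary case $n=1$, $\pi$ even by hand; this is a genuine (if small) improvement in rigor over the paper's proof.
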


\begin{proof}
For the number of MFTs, Lemma \ref{lem:emptyMFT} shows that they are at most twice the number of MFTs with full marking.
Note that the marker $(v_m,p)$ can only refer to leafs where $l_p(v_m)$ is non-empty and markers $(v_m,r)$ can only refer to nodes where $l_r(v_m)$ is non-empty.
It is easy to see that all sets described in this way are pairwise disjoint.
This implies that there are at most $|Q|$ such markers.
Thus, the number of MFTs with full marking is at most $n$ times the number of FNHTs.

By Lemma \ref{lem:fullFNHT}, the number of FNHTs is in turn at most twice as high as the number of all full FNHTs.
Thus we have bounded the number of MFTs by $4n$ times the number of full FNHTs used to estimate the lower bound in Theorem \ref{theo:lower}, irrespective of the priorities.

What remains is the trivial observation that the second part of the state-space, the subset construction, is dwarfed by the number of MFTs.
%For simplicity, we estimate the number of full FNHTs by $2^n$.
Consequently, we can estimate the state-space of the complement automaton divided by the lower bound from Theorem \ref{theo:lower} by $4n + 1$.
\end{proof}

\end{document}